\definecolor{codegreen}{rgb}{0,0.6,0}
\definecolor{codegray}{rgb}{0.5,0.5,0.5}
\definecolor{codepurple}{rgb}{0.58,0,0.82}
\definecolor{backcolour}{rgb}{0.95,0.95,0.92}
\lstdefinestyle{mystyle}{
    backgroundcolor=\color{backcolour},   
    commentstyle=\color{codegreen},
    keywordstyle=\color{magenta},
    numberstyle=\tiny\color{codegray},
    stringstyle=\color{codepurple},
    basicstyle=\ttfamily\footnotesize,
    breakatwhitespace=false,         
    breaklines=true,                 
    captionpos=b,                    
    keepspaces=true,                 
    numbers=left,                    
    numbersep=5pt,                  
    showspaces=false,                
    showstringspaces=false,
    showtabs=false,                  
    tabsize=2
}
\newtheorem{theorem}{Theorem}[section]
\newtheorem{lemma}[theorem]{Lemma}
\newtheorem{corollary}[theorem]{Corollary}
\newtheorem{proposition}[theorem]{Proposition}
\theoremstyle{definition}
\newtheorem{remark}[theorem]{Remark}
\theoremstyle{definition}
\newtheorem{definition}[theorem]{Definition}
\title{Designing Rules for Choosing a Winner in a Debate}
\author{
 Alexander Heckett \\
  Carnegie Mellon University\\
  Pittsburgh, USA \\
  \texttt{aheckett@andrew.cmu.edu} \\
   \And
 Vincent Conitzer \\
  Carnegie Mellon University\\
  Pittsburgh, USA \\
  \texttt{conitzer@cs.cmu.edu}
}
\begin{document}
\maketitle
\begin{abstract}
We consider settings where an uninformed principal must hear arguments from two better-informed agents, corresponding to two possible courses of action that they argue for. The arguments are verifiable in the sense that the true state of the world restricts the arguments that can be made by the agents. Each agent simply wants to be chosen as the winner and does so strategically based on the rule set by the principal. How should the principal design the rule to choose the better action? We provide a formal framework for answering this question, exhibit some basic properties of it, study the computational problems of evaluating and optimizing the principal’s policy, and provide key error bounds.
\end{abstract}

% keywords can be removed
%\keywords{First keyword \and Second keyword \and More}

\section{Introduction}

In human life, to choose between two courses of action, or to form an opinion on whether something is true or false, a common way to proceed is to set up a {\em debate} between two entities, and then to declare an outcome based on the debate.  In some cases, the debaters inherently care about the outcome: for example, if we are deciding from which of two vendors to buy, each vendor will naturally be motivated to argue its own side.  But the setup is effective enough that often we {\em choose} to set up such an adversarial debate, asking one party to argue the one side and another to argue the other side, even if beforehand these parties had no relation to either side.  The idea is that the party on the right side of the debate will generally have access to more and better arguments, and thus if we set up the rules for determining the outcome of the debate right, then we are likely to make the right choice.  The idea of using debate to make better decisions in AI is an appealing one and has recently received attention especially in the context about keeping AI safe.  But how should we formalize this setup and thereby enable its use in the context of AI?  In this paper, we consider a probabilistic, Bayesian-game setup in which there is a true {\em scenario} based on which the debating agents have {\em actions} available, and we focus on the key problem of choosing a rule that determines the winner of the debate.

{\bf Running example.} Suppose a politician is considering whether a national minimum wage should be raised. The politician wants to determine whether increases in minimum wages have historically been associated with increased or decreased economic productivity. Let us suppose that changes in economic productivity can be objectively and deterministically evaluated. The politician could examine every nation or territory in recent history and compare the number of minimum wage increases associated with increased economic productivity against the number of minimum wage increases associated with decreased economic productivity. Unfortunately, the politician does not have the time to assess all of these historical precedents.

Instead, the politician's plan is to contact two economists, each of whom is a domain expert familiar with every minimum wage increase in every region in recent history. These economists are biased, however: one wishes for the national minimum wage to be raised while the other wishes to avoid such a raise. The politician wants to receive four examples from each of these economists.  The politician will then personally verify these eight historical examples and confirm that each example supports the side that submitted it. (If any example does not support the economist that submitted it, then that economist immediately loses, i.e., the politician will support the other economist's preferred decision on minimum wage.) 

It is understood that each economist can reference at least four supporting examples, so the mere existence of four examples on each side tells the politician nothing.  How, then, can this setup help the politician?
If one economist's argument is supported by many more historical examples than the other economist's argument, then this economist has more flexibility in how they respond to the politician.  Thus, the politician can set up a game that measures this flexibility, allowing the politician to gauge the relative number of examples that each economist had available. As an example of the possibilities available to the politician, note that the politician could specify a rule such as ``I will be more impressed by examples where the country in question has seven letters in its name.'' Even when the number of letters in the name of the country is completely unrelated to how inherently relevant the example is to the politician's own decision (and presumably it is irrelevant), such a rule will give a relative advantage to the economist with more examples on their side, who is more likely to be able to produce examples satisfying the criterion.
We examine how well the politician can do by constructing these kinds of games.

\subsection{Related Work}

{\em Mechanism design} concerns the design of rules that result in good outcomes when used by multiple agents with private information.  Famous general mechanisms include the VCG mechanism~\cite{Vickrey61,Clarke71:Multipart,Groves73:Incentives} and the Myerson auction~\cite{Myerson81:Optimal}; it is also possible to compute optimal mechanisms for the specific mechanism at hand, a process called {\em automated mechanism design}~\cite{Conitzer02:Mechanism,Conitzer04:Self}.  In most mechanism design settings studied in the literature, a result known as the {\em revelation principle} holds, which implies that we can restrict attention to mechanisms in which each agent simply reveals all its information and has no incentive to lie about this information.  However, the revelation principle can fail to hold in settings with {\em partial verification}, where the space of actions that an agent can take is restricted by the true state of the world.  Such is the case for the setting studied in this paper.  There are general characterizations for when the revelation principle does and does not hold in such settings (and more general ones in which the true state dictates a {\em cost} for taking each action)~\cite{Green86:Partially,Yu11:Mechanism,Kephart21:Revelation}; when it does hold, computing optimal mechanisms is generally easy, whereas otherwise it tends to be computationally hard~\cite{Auletta11:Alternatives,Kephart15:Complexity,Zhang21:Automated}.

{\em AI safety via debate}~\cite{Irving18:AI} is a framework for keeping highly competent AI systems safe.  When human beings must judge whether a course of action is safe and useful, there is a concern that they may not be capable of doing that accurately themselves.  The proposed solution is to have two AI agents debate which answer is right, with humans eventually choosing the winner of this zero-sum debate game. There has been significant theoretical progress on the kinds of languages which can be decided by such debate protocols~\cite{pmlr-v235-brown-cohen24a}. Our contributions in this paper can be seen as fitting in this broader agenda, although with more of an abstract framing.

Another related research direction is {\em argumentation theory}~\cite{Dung95:On}, and in particular  a mechanism design approach to that~\cite{RahwanMechanism}.  However, a key difference is that in that work, the relationship of which argument defeats which other ones is assumed to be given exogenously, and in the mechanism design work, the concern is the possible manipulation of an agent withholding arguments that it knows.

\subsection{Our Results}

We introduce three classes of debate games: common knowledge debate games, common knowledge distinguishing debate games (a special kind of common knowledge debate game), and private information distinguishing debate games. In all three classes, a principal (i.e., judge) is presented with two agents that both wish to win the principal's favor. The principal knows of a list of scenarios and the probability that each scenario occurs, though the principal does not expect to know the scenario during any debate. In each scenario, the two agents are restricted in what actions they can take and the principal knows these restrictions. Furthermore, in each scenario, the principal has desires over which agent should win (in our running example, the politician wishes to reward the economist that has more historical examples to call upon, which is determined by what actual historical examples of minimum wage increases exist, which the politician has some prior beliefs about but does not know). The principal commits to a policy, publicly and in advance, specifying for every pair of actions taken by the two agents what probability each agent has of winning (we assume that the agents only take a single action and that the agents move simultaneously, see Section 2 for commentary on this assumption). The principal attempts to minimize the error of their policy, defined to be the principal's expected loss in utility as a result of not being able to always favor their desired winning agent. We investigate the basic behavior of these three classes of debate games. Afterwards, we prove two main kinds of results: computational complexity results and error bounds.

Our computational complexity results establish that, for each of these three classes of debate games, evaluating the error of a specific policy can be done in polynomial time with respect to the description length of a game. However, in all three classes of debate games, the principal faces an NP-complete problem in determining whether it's even possible to construct a policy which rewards the principal's desired winner with probability $ 1 $, let alone in finding such a policy.

Afterwards, we use probabilistic methods to derive error bounds for two of our three classes of debate games; namely, both for common knowledge distinguishing debate games and for private information distinguishing debate games. For both classes of debate games, if in most scenarios the principal's desired winning agent has many more available actions than the principal's desired losing agent, then there exist policies for the principal which have very low error. The methods of constructing these policies differ greatly between the common knowledge and private information settings, however.

If the agents can see each others' available actions (the common knowledge case) then the principal can randomly choose subsets of ``highlighted'' actions for both agents in such a way that whichever agent is the desired loser has only a handful of available actions (if any) that are highlighted. The principal can then always favor highlighted actions over non-highlighted actions and can randomly assign winners in highlighted-highlighted action pairs. Making such random choices makes it very likely that the desired winner has some available highlighted action which defeats each of the desired loser's handful of available highlighted actions, thus giving the policy a very low error.

If the agents cannot see each others' available actions (the private information case), then the principal can instead announce a random ordering of the combined action sets of the two agents and can always favor whichever agent plays a higher-ranked action. Such policies do not depend on the agents knowing each others' available actions since the agents never benefit from playing anything other than their highest-ranked action. If the desired winning agent has many more available actions than the desired losing agent, then the desired winning agent is proportionally more likely to have the highest ranked action so the error is again small.

We use these two techniques to establish upper bounds on the errors that the principal can achieve. For common knowledge distinguishing debate games we show that the minimum possible policy error decays faster than the reciprocal of any polynomial in the ratio of the minimum number of actions available to the desired winner to the maximum number of actions available to the desired loser. By contrast, our error bound for private information distinguishing debate games is weaker, decaying asymptotically as the reciprocal of this ratio. We prove that such asymptotic behavior is the best we can achieve in the general private information setting by constructing an example game whose minimum error has this asymptotic behavior.

\section{Setups}

We will assume that the agents each take one action and that these actions are taken simultaneously, though this choice should not be seen as restricting the kinds of games that our framework models. For example, we can take extensive-form representations of debates (which could contain multiple rounds of interaction between the agents) and convert them into normal form and then apply our formalism (at the expense of potentially dramatically increasing the size of the representation).

Throughout the paper, we assume there is a special ``default'' action $\delta$ that is always available to both agents (which could be thought of as not doing anything), to avoid the action set ever being empty. For notational convenience later on, we keep this action outside of the set of (other) actions $A_i$ for an agent, so that an agent's {\em full} set of actions is always $A_i \cup \{\delta\}$.

We first introduce a general setup under which the actions available to each agent are common knowledge between the agents:

\begin{definition}
\label{ckdg}
A \textit{common knowledge debate game} (abbreviated CKDG) is a tuple $ (A_1, A_2, S, P, C_1, C_2, u) $ where $ A_i $ is a finite set of actions potentially available to agent $ i \in \{1, 2\} $ (we require $ \delta \notin A_i $), $ S $ is a finite set of scenarios, $ P : S \to [0, 1] $ is a probability mass function over scenarios (which implies $ \sum_{s \in S} P(s) = 1 $), $ C_i : S \to \mathcal{P}(A_i) $ tells us what actions agent $ i \in \{1, 2\} $ can choose in any given scenario, and $ u : \{1, 2\} \times S \to \mathbb{R} $ tells us the principal's utility if a given agent wins in a given scenario.
\end{definition}

A common knowledge debate game specifies the information the principal possesses when trying to decide what policy to commit to publicly. Let's now define what such a policy consists of:

\begin{definition}
If $ B = (A_1, A_2, S, P, C_1, C_2, u) $ is a CKDG then a \textit{$ B $-policy} (or simply \textit{policy}) for the principal is a map $ M : \{1,2\} \times (A_1 \cup \{\delta\}) \times (A_2 \cup \{\delta\}) \to [0, 1] $ which specifies the probability that the principal will favor one of the agents given the actions performed by both agents. We require $ M(1,a_1,a_2) + M(2,a_1,a_2) = 1 $ for all $ a_1 \in A_1 \cup \{ \delta \} $ and all $ a_2 \in A_2 \cup \{ \delta \} $.
\end{definition}

Note that we will assume throughout that the principal can commit to whatever policy they wish. The section on future research contains more commentary on this assumption.

Now that we've established what a policy consists of, we can examine what makes a policy good or bad. We begin by understanding how the agents respond to a given policy:

\begin{definition}
Let $ B $ be a CKDG and let $ M $ be a $ B $-policy. Define $ w^i_{B,M} : S \to [0, 1] $ to be the function which computes the probability that agent $ i \in \{1, 2\} $ will win in any given scenario.  This is well defined because, due to the common-knowledge assumption, it corresponds to a zero-sum game the agents play: if $ s \in S $ then $ w^i_{B,M}(s) $ is the value to agent $ i $ of the two-player zero-sum normal-form game with payoff matrix $ M(i, \cdot, \cdot) $, where the first unspecified argument must come from $ C_1(s) \cup \{ \delta \} $ and where the second unspecified argument must come from $ C_2(s) \cup \{ \delta \} $.
\end{definition}

We are now ready to characterize how bad a given policy is:

\begin{definition}
Let $ B $ be a CKDG. The \textit{error} of a $ B $-policy $ M $, $ e_B^{\text{CKDG}}(M) $, is the utility lost to the principal by being unable to favor the desired winner:
\begin{align*}
e^{\text{CKDG}}_B(M) & = \underset{s \sim P}{\mathbb{E}} \begin{cases} d_B(s) w_{B,M}^1(s) & u(1,s) \leq u(2,s) \\ d_B(s) w_{B,M}^2(s) & u(1,s) > u(2,s) \end{cases}
\end{align*}
where $ d_B(s) = |u(1,s) - u(2,s)| $ is the difference in utility between the two agents winning a given scenario.
\end{definition}

CKDG's allow for the two debating agents to appear {\em ex ante} different. But we are especially interested in the situation where the two players appear {\em ex ante} similar but have two different roles (say, the role of the economist with more historical examples or the role of the economist with fewer historical examples) and our task is to distinguish which player occupies which role (in our running example, our goal is to determine which economist's argument is supported by more historical precedents). To model this specific class of problems, we introduce the \textit{distinguishing} variant of common knowledge debate games:

\begin{definition}
\label{ckddg}
A \textit{common knowledge distinguishing debate game} (abbreviated CKDDG) is a tuple $ (A, S, P, C_w, C_l) $ where $ A $ is a finite set of actions (we require $ \delta \notin A $), $ S $ is a finite set of anonymized scenarios, $ P : S \to [0, 1] $ is a probability mass function over anonymized scenarios, $ C_w : S \to \mathcal{P}(A) $ tells us what actions are available to the agent that we want to win in a given anonymized scenario, and $ C_l : S \to \mathcal{P}(A) $ tells us what actions are available to the agent that we want to lose in a given anonymized scenario. 
\end{definition}

Note that unlike in general CKDGs, ``the agent that we want to win in a given anonymized scenario'' and ``the agent that we want to lose in a given anonymized scenario'' no longer refer to either agent in particular. Instead these refer to roles that (according to our prior) either agent will occupy with 50\% probability. More formally, to see how we interpret common knowledge distinguishing debate games as common knowledge debate games:

\begin{definition}
If $ B = (A,S,P,C_w,C_l) $ is a CKDDG then \textit{the CKDG induced by $ B $} is $ B' = (A_1', A_2', S', P', C_1', C_2', u') $ where $ A_1' = A_2' = A $, $ S' = \{1, 2\} \times S $ (a scenario in the CKDG is an anonymized scenario in the CKDDG together with a labeling of which agent should win), $ P'((i, s)) = \frac{1}{2} P(s) $ (for every anonymized scenario in the distinguishing problem we uniformly randomly choose which agent is which), and $ C_j'((i,s)) = \begin{cases} C_w(s) & i = j \\ C_l(s) & i \neq j \end{cases} $, $ u(j,(i,s)) = \begin{cases} 1 & i = j \\ 0 & i \neq j \end{cases} $ for $ j \in \{1, 2\}$. 
\end{definition}

When the context is clear we may refer to anonymized scenarios simply as scenarios. 

Since we can always unfurl a CKDDG into a CKDG, given a CKDDG we can talk about policies and their errors:
\begin{definition}
If $ B $ is a CKDDG then a \textit{$ B $-policy} (or simply \textit{policy}) is a $ B' $-policy where $ B' $ is the CKDG induced by $ B $.
\end{definition}
\begin{definition}
If $ B $ is a CKDDG, $ M $ is a $ B $-policy, and $ B' $ is the CKDG induced by $ B $, then the \textit{error} of $ M $ is $ e^{\text{CKDDG}}_B(M) = e^{\text{CKDG}}_{B'}(M) $. 
\end{definition}
More explicitly, if $ B = (A,S,P,C_w,C_l) $ is a CKDDG then a $ B $-policy $ M $ is a map from $ \{1, 2\} \times (A \cup \{\delta\})^2 $ to $ [0, 1] $ such that paired probabilities sum to $ 1 $. The error of a CKDDG policy has a simple interpretation: it is the probability that the desired loser actually wins:
\begin{remark}
If $ B $ is a CKDDG and if $ M $ is a $ B $-policy then:
\begin{align*}
e_{B}^{\text{CKDDG}}(M) & = \underset{s \sim P}{\mathbb{E}} \left [ \frac{w^1_{B',M}((2,s)) + w^2_{B',M}((1,s))}{2} \right ]
\end{align*}
\end{remark}

We can now model our running economic example as a CKDDG. Since the economists are domain experts, they are expected to know all instances of minimum wage increases that both they and their opponent can reference; so, the common knowledge assumption is reasonable. Let $ H $ denote the set of historical periods (for specific countries) that both economists can point to; for example, ``USA, 1980-2000'' or ``Japan, 1960-1980.'' Let us suppose that the politician recognizes 200 such historical periods. The actions potentially available to our economists, then, are represented by selections of four historical periods: $ A = \begin{pmatrix} H \\ 4 \end{pmatrix} $. Our politician will thus be expected to supply, implicitly or explicitly, a roughly 65-million-by-65-million payoff matrix to the economists. As for the scenarios, suppose the politician believes that there are 40 instances of minimum wage increases among these 200 historical periods, and suppose the politician believes that either 20\% of these increases were associated with improved economic productivity or 20\% of these increases were associated with decreased economic productivity. In other words, suppose the politician believes that one side can call upon 32 pieces of evidence while the other side can call upon 8 pieces of evidence. Suppose the politician considers both regimes equally likely and considers any assignment of these outcomes to historical periods equally likely. Then we can take our anonymized scenarios to represent ways of choosing 32 pieces of evidence for one side (the desired winner) and 8 pieces of evidence for the other side (the desired loser): $ S = \left \{ (H_w,H_l) \in \begin{pmatrix} H \\ 32 \end{pmatrix} \times \begin{pmatrix} H \\ 8 \end{pmatrix} : H_w \cap H_l = \emptyset \right \} $. When this CKDDG is unfurled into a CKDG, each anonymized scenario will be unfurled into two de-anonymized scenarios: one de-anonymized scenario in which the 32 pieces of evidence show higher economic productivity and the 8 pieces of evidence show lower productivity and another de-anonymized scenario in which the 32 pieces of evidence show lower productivity and the 8 pieces of evidence show higher productivity. Since the politician's prior over scenarios is uniform, $ P(s) = \frac{1}{|S|} $ for all $ s \in S $. Finally, in any scenario, the desired winner can mention any subset of four pieces of evidence supporting the desired winner and the desired loser can mention any subset of four pieces of evidence supporting the desired loser: $ C_w((H_w,H_l)) = \begin{pmatrix} H_w \\ 4 \end{pmatrix} \subseteq A, C_l((H_w,H_l)) = \begin{pmatrix} H_l \\ 4 \end{pmatrix} \subseteq A $. With these definitions, $ (A, S, P, C_w, C_l) $ specifies a CKDDG modeling our setup.

We will now introduce another class of debate games, one in which the agents do not necessarily know what actions are available to each other. We will also jump directly to the distinguishing version of the problem since the error of a policy is not well-defined for general non-distinguishing private-information debate games\footnote{We defined the error of a CKDG as the loss of utility induced by being unable to choose the preferred winner on a scenario-by-scenario basis. In the private-information setting, it's possible to have multiple equilibria of the Bayesian game between agents 1 and 2 which result in different losses in utility to the principal.}.
\begin{definition}
\label{piddg}
A \textit{private information distinguishing debate game} (abbreviated PIDDG) is a tuple $ (A, S, P, C_w, C_l) $ subject to the exact same constraints as a CKDDG.
\end{definition}

\begin{definition}
If $ B $ is a PIDDG then a \textit{$ B $-policy} (or simply \textit{policy}) $ M $ is a map from $ \{1,2\} \times (A \cup \{\delta\})^2 $ to $ [0, 1] $ satisfying $ M(1,a_1,a_2) + M(2,a_1,a_2) = 1 $ for all $ a_1, a_2 \in A \cup \{\delta\} $.
\end{definition}

So far, the definition of PIDDG's has been identical to the definition of CKDDG's. The difference between PIDDG's and CKDDG's lies in the way we define the error of a policy. In particular, while a CKDDG gets decomposed into an even number of normal-form games, a PIDDG gets decomposed into two more intricate Bayesian games:
\begin{definition}
Let $ B = (A, S, P, C_w, C_l) $ be a PIDDG and let $ M $ be a $ B $-policy. Define $ G_1(B,M) $ to be the Bayesian game where agent $ 1 $ is the desired winner. Formally, $ G_1(B,M) $ has agent set $ \{1, 2\} $, for $ i \in \{1, 2\} $ agent $ i $ has action set $ A \cup \{\delta\} $ and type set $ \mathcal{P}(A) $ (an agent's type represents the non-default actions they can access). For $ a_1, a_2 \in A \cup \{\delta\} $, the utility to agent $ i $ of agent $ 1 $ playing $ a_1 $ while having type $ t_1 $ and agent $ 2 $ playing $ a_2 $ while having type $ t_2 $ is $ M(i,a_1,a_2) $ provided that $ a_1 \in t_1 \cup \{\delta\}, a_2 \in t_2 \cup \{\delta\} $. If agent $ i $ tries to play an unavailable action ($ a_i \notin t_i \cup \{\delta\} $) but agent $ -i $ plays an available action then agent $ i $ gets utility $ 0 $ and agent $ -i $ gets utility $ 1 $. If both agents play unavailable actions then both agents get $ \frac{1}{2} $ utility. Finally, the joint probability distribution over types is induced by $ P $ where in scenario $ s \in S $ agent $ 1 $ gets type $ C_w(s) $ and agent $ 2 $ gets type $ C_l(s) $.
\end{definition}
\begin{definition}
Let $ B = (A,S,P,C_w,C_l) $ be a PIDDG and let $ M $ be a $ B $-policy. Define $ G_2(B,M) $ identically to $ G_1(B,M) $ except that in scenario $ s \in S $, agent $ 1 $ gets type $ C_l(s) $ and agent $ 2 $ gets type $ C_w(s) $.
\end{definition}

We can again define the error as the probability that the principal's desired loser actually wins. In particular, the error is the average of the probability that agent 1 wins when in the Bayesian game where agent 2 should win and the probability that agent 2 wins when in the Bayesian game where agent 1 should win:

\begin{definition}
If $ B $ is a PIDDG and if $ M $ is a $ B $-policy then the \textit{error} of $ M $ is:
\begin{align*}
e^{\text{PIDDG}}_B(M) & = \frac{v_1(G_2(B,M)) + v_2(G_1(B,M))}{2}
\end{align*}
where $ v_i(G_j(B,M)) $ denotes the value of the game $ G_j(B,M) $ to agent $ i $ for $ i, j \in \{1, 2\} $.
\end{definition}

Note that the agents know whether they're in the game where they're the desired winner. This assumption matters because knowing whether their opponent is the desired winner or desired loser changes the distribution over available actions their opponent is likely to possess. 

\section{Initial Observations}

We first present some simple properties of debate games.

\begin{remark}
\label{injective}
If the action availability functions $ C_w, C_l $ in a PIDDG are injective, then the agents have enough information to determine the scenario and the PIDDG functions identically to a CKDDG.
\end{remark}

\begin{remark}
\label{minimum_error}
Each of the error functions is continuous in the policy, and the space of policies is always compact, so for every instance of any of our three classes of debate games, there is always some policy with the minimum possible error.
\end{remark}

Let's now consider how changes to the setup impact the agents.

\begin{remark}
\label{policy_monotonicity}
In any context, if the principal modifies the policy $ M $ by increasing $ M(1, \cdot, \cdot) $ element-wise at the expense of $ M(2, \cdot, \cdot) $, then agent $ 1 $ cannot be worse off and agent $ 2 $ cannot be better off. This is apparent in the CKDG/CKDDG context because the normal-form game associated with each scenario shifts in favor of agent $ 1 $. In the PIDDG context, changes to $ M $ propagate monotonically through the two associated Bayesian games and thus monotonically impact the values of those games. The symmetric result holds if the principal lowers $ M(1, \cdot, \cdot) $ and raises $ M(2, \cdot, \cdot) $.
\end{remark}

The situation is more complicated if we try to change action availability:

\begin{remark}
\label{sk_action_monotonicity}
For a CKDG, if we fix a policy and then change a scenario to give an agent more actions, that agent cannot end up worse off (this follows from the same property being true of the normal-form games associated with each scenario). Likewise, for a CKDDG, giving the desired winner more actions in an anonymous scenario cannot increase the error and giving the desired loser more actions cannot decrease it.
\end{remark}

This monotonicity fails in the private information case, however, because giving an agent more actions could hinder that agent's ability to infer what actions are available to its opponent.

\begin{proposition}
\label{piddg_monotonicity_failure}
There exist PIDDG's $ B = (A, S, P, C_w, C_l) $ and $ B' = (A, S, P, C_w', C_l') $ with $ C_w'(s) \supseteq C_w(s) $ and $ C_l'(s) \subseteq C_l(s) $ for all $ s \in S $ such that the minimum possible error of a policy for $ B' $ exceeds the minimum possible error of a policy for $ B $.
\end{proposition}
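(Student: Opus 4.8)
The plan is to exhibit an explicit pair $B, B'$ satisfying the hypotheses such that $B$ admits a policy of error $0$ while $B'$ admits none; since errors are nonnegative and a minimizer always exists (Remark~\ref{minimum_error}), this gives the strict inequality. The mechanism is the one flagged just before the statement: enlarging the desired winner's action sets can fuse distinct scenarios into a single \emph{type}, destroying the winner's ability to condition its play on the scenario and hence on the desired loser's available actions. (Two scenarios provably never suffice for such an example, so the construction uses $n\ge 3$ of them.)

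Concretely, I would fix $n\ge 3$, let $A=\{1,\dots,n\}$ with addition read cyclically ($n+1\equiv 1$), let $S=\{s_1,\dots,s_n\}$ with $P$ uniform, and set, in $B$, $C_w(s_i)=\{i\}$ and $C_l(s_i)=\{i+1\}$; in $B'$, $C_w'(s_i)=A$ and $C_l'(s_i)=\{i+1\}$. Then $C_w'(s)\supseteq C_w(s)$ and $C_l'(s)=C_l(s)\subseteq C_l(s)$ for every $s$, so only the winner's sets change and $B'$ is weakly more favorable (more winner actions, the same loser actions).

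To see that $B$ has a zero-error policy: since $C_w,C_l$ are injective, $B$ behaves as a CKDDG (Remark~\ref{injective}), whose error is computed scenario by scenario. I would use the policy $M$ with $M(1,i,i+1)=M(1,i,\delta)=1$ and $M(1,i+1,i)=M(1,\delta,i)=0$ for all $i$, all remaining entries equal to $\tfrac12$; the hypothesis $n\ge 3$ is exactly what prevents any two of these prescriptions from clashing (a clash would require $i+1\equiv i-1\pmod n$). In the game where agent $1$ is the desired winner, agent $1$ knows it is in scenario $s_i$, plays $i$, and wins outright because $M(1,i,\cdot)=1$ on the loser's only options $i+1$ and $\delta$; hence the loser's game value is $0$. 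Symmetrically, in the game where agent $2$ is the winner, agent $2$ plays $i$ and the loser is shut out by $M(1,i+1,i)=M(1,\delta,i)=0$. So $e^{\text{PIDDG}}_B(M)=0$. The care required here is to verify that these profiles are genuine equilibria and that they pin down the \emph{values} of the two Bayesian games (using zero-sumness together with the ever-available $\delta$), not merely that they describe plausible behavior.

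To see that $B'$ has no zero-error policy: now the desired winner's type is constantly $A$, so in the game where agent $1$ is the winner its strategy is a single distribution $\sigma$ over $A\cup\{\delta\}$ that cannot depend on the scenario, while the loser's type ranges over $\{2\},\dots,\{n\},\{1\}$. Driving the loser's value to $0$ in every scenario forces $\mathbb{E}_\sigma[M(1,\cdot,\ell)]=1$ for every $\ell\in\bigcup_i(\{i+1\}\cup\{\delta\})=A\cup\{\delta\}$, hence $M(1,a^*,\ell)=1$ for all $a^*\in\operatorname{supp}(\sigma)$ and all $\ell\in A\cup\{\delta\}$: the submatrix of $M(1,\cdot,\cdot)$ on $(A\cup\{\delta\})^2$ has an all-ones row. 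The mirror-image analysis of the game where agent $2$ is the winner forces an all-zeros column. Their intersection entry would then be both $1$ and $0$ --- a contradiction --- so no $B'$-policy has error $0$, whence by Remark~\ref{minimum_error} the minimum error of $B'$ is strictly positive and thus strictly larger than the minimum error $0$ of $B$. I expect this last step to be the main obstacle: reducing arbitrary mixed winner strategies to the ``all-ones row'' conclusion while checking that the desired loser, despite owning only one non-default action per scenario, still collectively rules out every entry of that row across all of $A\cup\{\delta\}$.
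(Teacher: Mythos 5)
Your proposal is correct, and for $n=3$ your construction is the paper's own example up to relabeling: a cyclic (rock--paper--scissors-like) game in which the desired winner's singleton action sets are blown up to all of $A$ while the loser's sets are unchanged, so that the winner's type no longer reveals the scenario. Your zero-error policy for $B$ and the check that it pins down the values of both Bayesian games match the paper's Appendix A.1 argument. The one place you diverge is in certifying that $B'$ has strictly positive minimum error: you give a direct, self-contained contradiction (a maximin strategy for the winner in $G_1$ would force an all-ones row of $M(1,\cdot,\cdot)$ over $A\cup\{\delta\}$, the symmetric analysis of $G_2$ forces an all-zeros column, and the intersecting entry is impossible), then invoke the compactness remark to pass from ``no perfect policy'' to a positive minimum. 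The paper instead observes that $B'$ is exactly the instance of its Theorem~\ref{piddg_lower_bound} construction with $m=3$, $n=1$, which yields the quantitative bound $e^{\text{PIDDG}}_{B'}(M)\ge \frac{1}{6}$; your route avoids any dependence on that later theorem (and matches the informal parenthetical in the paper's main-text proof), while the paper's route buys an explicit numerical gap. Your parenthetical claim that two scenarios never suffice is unproved, but it carries no weight in the argument.
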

\begin{proof}
Let $ A = S = \{0, 1, 2\} $, let $ P $ be uniform over $ S $, and define $ C_w, C_l, C_w', C_l' $ by $ C_w(s) = \left \{ (s+1) \mod 3 \right \} $, $ C_w'(s) = A $, $ C_l(s) = C_l'(s) = \{s\} $. Note that $ B $ encodes rock-paper-scissors where the agents can either play an assigned action or take the default action. There is a policy for $ B $ with zero error (where we simply copy the reward-structure from rock-paper-scissors while punishing any submission of the default action) but the minimum error for $ B' $ is strictly positive (the desired winner is now blind to the desired loser's available actions and no policy can allow agent $ 1 $ an action which beats every action of agent $ 2 $ and agent $ 2 $ an action which beats every action of agent $ 1 $). See appendix A.1 for details.
\end{proof}

Let's now consider the role of randomization in debate games. Our definition of a policy allows agents to win with some probability for every pair of actions. We may ask whether it's ever beneficial to avoid declaring an absolute winner for some pair of actions, and the answer is yes:
\begin{proposition}
\label{non_binary_policy}
There exist CKDG's, CKDDG's, and PIDDG's such that the minimum error of any policy is strictly less than the minimum error among policies whose image is contained in $ \{0, 1\} $.
\end{proposition}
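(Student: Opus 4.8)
The plan is to exhibit, for each of the three classes, an explicit instance with a handful of scenarios and actions, compute the minimum error over $\{0,1\}$-valued policies by a finite case analysis (the set of such policies is finite, and Remarks~\ref{policy_monotonicity} and~\ref{sk_action_monotonicity} prune it drastically), and then write down a single policy that takes an intermediate value on one ``pivot'' action pair and has strictly smaller error than any of them (a minimizer over \emph{all} policies exists by Remark~\ref{minimum_error}, so it suffices to beat the best binary one). The conceptual engine is that the value of a two-player zero-sum game is \emph{not} a concave function of its payoff matrix: for a $2\times2$ game without a pure saddle point the value is $\tfrac{wz-xy}{w+z-x-y}$, genuinely non-linear; e.g.\ $\operatorname{val}\begin{pmatrix}t & 0\\ 0 & 1\end{pmatrix}=\tfrac{t}{1+t}$, strictly concave in $t$. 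Thus an intermediate policy value on an action pair can pin an induced zero-sum game to an equilibrium value that no binary policy reaches, and if that value sits more favorably in the overall error balance than either $0$ or $1$, randomization strictly helps.

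Concretely I would build an instance along the following lines. One scenario $s_1$ (favoring, say, agent~$1$) is arranged so that, once the \emph{other} scenarios pin every entry of its induced game except the pivot $t$, agent~$1$ faces a $2\times2$ matrix of the shape $\begin{pmatrix}1 & 0\\ 0 & t\end{pmatrix}$; its error contribution is then $P_1\cdot\tfrac{1}{1+t}$, convex and strictly decreasing in $t$. A second scenario $s_2$ — with the winner/loser roles reversed on the same pivot pair, or with the relevant agent restricted to the default action so its value reduces to the single entry $t$ — contributes $P_2\cdot t$, linear and increasing in $t$. Choosing the prior with $P_1/P_2\in(1,4)$ makes $P_1\tfrac{1}{1+t}+P_2 t$ strictly convex with interior minimizer $t^{*}=\sqrt{P_1/P_2}-1\in(0,1)$, whereas the best binary policy is forced to $t\in\{0,1\}$ and so has error $\min(P_1,\ \tfrac{P_1}{2}+P_2)>P_1\tfrac{1}{1+t^{*}}+P_2 t^{*}$. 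The steps are: (1) write the instance and the map from policies to errors; (2) argue, using monotonicity (Remark~\ref{sk_action_monotonicity}) and inspection, that in every near-optimal policy all non-pivot entries take their pinned $0/1$ values, reducing everything to the one-parameter problem above; (3) solve that problem; (4) enumerate the $\{0,1\}$-valued policies and check the best has the claimed strictly larger error; (5) port the instance to the CKDDG setting via the unfurling of Definition~\ref{ckddg} (and, arranging $C_w,C_l$ to be injective, reuse it verbatim as a PIDDG by Remark~\ref{injective}), checking that the same pinning and the same one-parameter picture survive.

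The main obstacle is \emph{avoiding degeneracy} when choosing the instance. The naive small examples collapse: if the relevant action sets coincide across scenarios or the two roles are symmetric, the error is identically $\tfrac12$ for every policy; and if agent~$1$'s induced game in the key scenario can be driven to value~$1$ by an all-winning row, or agent~$2$'s to value~$0$ by an all-losing column, through a row or column not containing the pivot, then the error is a \emph{linear} function of the pivot and a $\{0,1\}$ vertex is already optimal. Forcing the key game to sit in its mixing regime at the optimum therefore requires enough auxiliary scenarios to simultaneously (a) block every all-winning-row/all-losing-column escape and (b) pin every entry of that game except the pivot; doing this bookkeeping, and then computing the small non-binary-payoff game values, is the bulk of the work. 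One further subtlety: in the CKDDG/PIDDG variants a $\{0,1\}$-valued policy cannot be symmetric on the diagonal, so the diagonal-symmetrization that is ``without loss of generality'' for the minimum over \emph{all} policies is not available when lower-bounding the minimum over binary ones, and the finite check in step~(4) must range over all binary policies (up only to the agent-swap symmetry).
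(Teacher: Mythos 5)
Your conceptual engine is the same one the paper uses: build an instance with a cheap ``core'' that pins the policy's behaviour on most action pairs, plus two low-probability scenarios in which the same pivot action pair(s) appear with the winner/loser roles reversed, so that any $\{0,1\}$ commitment on the pivot hurts one of the two scenarios while an intermediate value (the paper uses $\tfrac12$) hedges between them. The paper's instance is exactly of this shape: a rock--paper--scissors core on actions $\{0,1,2\}$ with probability $0.3$ each, scenario $3$ ($C_w=\{0,1,2\}$, $C_l=\{0,1,2,3\}$, probability $0.06$) and scenario $4$ ($C_w=\{3\}$, $C_l=\{0,1,2\}$, probability $0.04$), with action $3$ versus $\{0,1,2\}$ as the pivot; the half-valued policy attains error $0.05$ while every binary policy has error at least $0.056$.

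The genuine gap is that you never exhibit the instance, and the part you defer as ``the bulk of the work'' is in fact the proof. Two specific points would not go through as described. First, the ``pinning'' cannot be exact: the core scenarios only constrain policies whose total error is already below some threshold (the paper's threshold is $\tfrac{3}{40}$), so the lower bound for binary policies has the form ``either the binary policy violates the pinned structure and its error exceeds the threshold, or it respects it and a case analysis gives error $\geq 0.056$''; the probabilities of the core and pivot scenarios must be tuned so that threshold $>$ binary bound $>$ non-binary error, and this arithmetic is part of what must be checked. Second, even for binary policies that respect the pinning, the residual freedom is not a single scalar $t$: the ever-present $\delta$ row/column (shared across all scenarios) and the choice of which of the winner's actions beats what leave several independent binary entries, and the paper's lower bound is a two-variable case analysis (on $M(1,\delta,3)$ and $\max_i M(1,i,3)$, with a symmetric copy for the swapped roles), not a one-parameter convexity argument of the form $P_1\frac{1}{1+t}+P_2t$. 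Your clean $2\times2$ picture with interior minimiser $t^*=\sqrt{P_1/P_2}-1$ is a plausible heuristic, but whether an instance exists in which the error of every near-optimal policy reduces exactly to that one-parameter family is precisely the unresolved bookkeeping; until an explicit game is written down and the binary lower bound verified against it (and then ported to CKDDG/PIDDG via unfurling and injectivity, which you do correctly anticipate), the proposition is not proved.
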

\begin{proof}
Consider the CKDDG $ (A, S, P, C_w, C_l) $ defined by $ A = \{0, 1, 2, 3\} $, $ S = \{0, 1, 2, 3, 4\} $, $ P(0) = P(1) = P(2) = \frac{3}{10} $, $ P(3) = \frac{6}{100} $, $ P(4) = \frac{4}{100} $, $ C_w(i) = \{(i+1) \mod 3\} $ for $ i \in \{0, 1, 2\} $, $ C_w(3) = \{0, 1, 2\} $, $ C_w(4) = \{3\} $, $ C_l(i) = \{i\} $ for $ i \in \{0, 1, 2\} $, $ C_l(3) = \{0, 1, 2, 3\} $, $ C_l(4) = \{0, 1, 2\} $. Then there exists a policy whose error is strictly less than the minimum error among policies which map to $ \{0, 1\} $. For details, see appendix A.2. 

We can unfurl our CKDDG to a CKDG. We also have injective action availability functions so our CKDDG behaves identically to a PIDDG. Thus our result carries over to CKDG's and PIDDG's.
\end{proof}

Not only does it sometimes benefit the principal to choose a probabilistic policy, it also sometimes benefits the principal to choose a policy that incentivizes the agents to act probabilistically as well:
\begin{proposition}
\label{mixed_agent_strategies}
There exist CKDG's, CKDDG's, and PIDDG's such that the minimum error of any policy is strictly less than the minimum error among policies such that the agents use pure strategies in equilibrium. 
\end{proposition}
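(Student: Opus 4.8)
The plan is to exhibit one explicit CKDDG whose minimum error is attained only by policies under which some induced zero-sum game requires mixing, and then to transfer the example to CKDG's and PIDDG's. Concretely I would build a small instance — on the order of three or four actions and a handful of scenarios — containing one ``cyclic'' anonymized scenario $s^\star$ in which the desired winner's available actions form a rock--paper--scissors-like cycle against the desired loser's, together with a few auxiliary scenarios whose action sets deliberately overlap those of $s^\star$. The auxiliary scenarios are there to create tension: forcing the $s^\star$-game to have a pure saddle point means (up to relabeling) pinning entries of $M$ so that some fixed winner action of $s^\star$ is safe against all of the loser's $s^\star$-actions, and by the overlap this same pinning lets the desired loser win more often in an auxiliary scenario. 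A natural starting point is the instance from Proposition~\ref{non_binary_policy} (scenarios $3$ and $4$ there already encode exactly such an overlap between a cyclic set and its superset), possibly with minor modification; otherwise a fresh small instance built on the same principle will do.

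With the instance fixed, the argument has two steps. First I would write down a specific policy $M^\star$, compute its error $\varepsilon^\star$, and verify that in the two normal-form games the induced CKDG attaches to $s^\star$ the game value is attained only by a mixed strategy (e.g.\ a uniform randomization over the cyclic actions); this is a finite computation. Second — and this is the crux — I would show that every policy $M$ under which all induced normal-form games possess pure saddle points has $e^{\mathrm{CKDDG}}_B(M) > \varepsilon^\star$. For this I would use the characterization that a zero-sum matrix game has a pure saddle point iff $\max\min = \min\max$, and enumerate, for each of the two games attached to $s^\star$, where such a saddle point can sit: either the desired winner has a pure action whose worst case over the loser's available responses already equals the game value, or symmetrically the desired loser has a pure action capping the winner's payoff at that value. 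In each case I would trace the constraint this puts on the relevant entries of $M$, carry it through the overlapping auxiliary scenario, lower-bound the error that scenario then contributes, and conclude that the total strictly exceeds $\varepsilon^\star$. Since the $s^\star$-games live in a low-dimensional space, this case analysis is finite, if somewhat tedious, and the instance must be tuned so that the cost is strictly positive rather than merely nonnegative — that tuning, and verifying the strict inequality in every case, is the main obstacle.

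Finally, to cover all three classes at once: the error and the ``pure strategies in equilibrium'' condition for a CKDDG are by definition read off the induced CKDG, so the CKDG statement is immediate from the CKDDG one. For the PIDDG statement I would first make $C_w$ and $C_l$ injective by appending to each scenario $s$ a private dummy action $z_s$ (distinct across scenarios), placed in $C_w(s)$ and $C_l(s)$, and extending $M^\star$ (and likewise any competing policy) so that every $z_s$ loses against every other action and wins against none; such actions are never played in equilibrium and change no error and no equilibrium structure, while forcing $C_w, C_l$ to be injective. Remark~\ref{injective} then says the resulting PIDDG behaves identically to the CKDDG, so the strict gap carries over, completing the proof for all three classes.
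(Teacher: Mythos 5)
Your overall plan — a rock--paper--scissors-style family of small scenarios overlapping with one large scenario in which mixing is needed, a concrete policy whose error you compute, and then a lower bound over all pure-strategy-incentivizing policies — is exactly the shape of the paper's proof (the paper uses $A=S=\{0,1,2,3\}$, $P(0)=P(1)=P(2)=\frac{3}{10}$, $P(3)=\frac{1}{10}$, $C_w(i)=\{(i+1)\bmod 3\}$, $C_l(i)=\{i\}$ for $i\le 2$, $C_w(3)=\{0,1,2\}$, $C_l(3)=\{0,1,2,3\}$, and the policy of Table \ref{good_non_binary_policy}, with error $0.05$). But the crux of the statement is the second step, and your proposal leaves it as an unexecuted enumeration whose success depends on ``tuning'' an instance you have not fixed; you explicitly flag this as the main obstacle. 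That is a genuine gap: nothing in the sketch shows that the constraints imposed by a pure saddle point can always be traced to a \emph{strictly} larger error, and indeed the difficulty is not a brute-force case analysis over entries of $M$ but a structural observation. The paper's argument is: if the de-anonymized scenario $(1,3)$ has a pure equilibrium, the desired winner's equilibrium action $b_1(1,3)\in\{0,1,2,\delta\}$ is a maximin action guaranteeing the value $p_1$ against every action in $C_l(3)\cup\{\delta\}$; since $b_1(1,3)$ is available to the desired \emph{loser} in some small scenario $s_1$ and $C_w(s_1)\cup\{\delta\}\subseteq C_l(3)\cup\{\delta\}$, that same action guarantees the loser at least $p_1$ in $(2,s_1)$. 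Weighting by the priors gives error at least $\frac{1}{20}(1-p_1)+\frac{3}{20}p_1+\frac{1}{20}(1-p_2)+\frac{3}{20}p_2=\frac{1}{10}+\frac{1}{10}(p_1+p_2)\ge\frac{1}{10}>0.05$. Without this reuse-of-the-maximin-action idea (or a completed case analysis on a fully specified instance), your proof is a plan rather than a proof.

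A second, smaller flaw is the PIDDG transfer. Appending a per-scenario dummy action $z_s$ to both $C_w(s)$ and $C_l(s)$ and ``extending $M^\star$ (and likewise any competing policy) so that every $z_s$ always loses'' is not legitimate on the lower-bound side: when you quantify over all pure-strategy-incentivizing policies of the \emph{modified} game, the principal is free to assign the dummies arbitrary win probabilities, so you cannot assume they are inert, and you would have to re-prove that the augmented actions cannot lower the constrained minimum error (they also alter which policies admit pure equilibria). The paper sidesteps this entirely by choosing an instance whose $C_w,C_l$ are already injective, so Remark \ref{injective} applies directly; if you tune your own instance, the cleaner route is to build injectivity into it rather than to patch it afterwards. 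The CKDG part of your transfer (unfurling the CKDDG) is fine.
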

\begin{proof}
Consider the CKDDG $ (A, S, P, C_w, C_l) $ where $ A = S = \{0, 1, 2, 3\} $, $ P(0) = P(1) = P(2) = \frac{3}{10} $, $ P(3) = \frac{1}{10} $, $ C_w(i) = \{(i+1) \mod 3\} $ for $ i \in \{0, 1, 2\} $, $ C_w(3) = \{0, 1, 2\} $, $ C_l(i) = \{i\} $ for $ i \in \{0, 1, 2\} $, and $ C_l(3) = \{0, 1, 2, 3\} $. Then there is a policy whose error is strictly less than the minimum error among policies which incentivize pure strategies. For details, see appendix A.3.

This CKDDG can be unfurled into a CKDG and it has injective action availability functions so it behaves identically to a PIDDG. Thus we get the same result for CKDG's and PIDDG's.
\end{proof}

\section{Complexity Results}

We begin by showing that we can find the error of any individual policy for any kind of debate game in polynomial time with respect to the size of the game description:

\begin{proposition}
\label{ckdg_eval}
Given a CKDG $ B = (A_1, A_2, S, P, C_1, C_2, u) $ and a policy $ M $, we can compute $ e^{\text{CKDG}}_B(M) $ in polynomial time.
\end{proposition}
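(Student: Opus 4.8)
The plan is to observe that the error is a finite sum indexed by the scenario set $S$, which is part of the input and written down explicitly, and that each summand is computable in polynomial time. Concretely, I would rewrite
\[
e^{\text{CKDG}}_B(M) \;=\; \sum_{s \in S} P(s)\, d_B(s)\, w_{B,M}^{\,i(s)}(s),
\]
where $d_B(s) = |u(1,s) - u(2,s)|$ and $i(s) = 1$ when $u(1,s) \le u(2,s)$ and $i(s) = 2$ otherwise. The value $d_B(s)$ and the comparison defining $i(s)$ are read off directly from $u$, so the only real work is evaluating $w_{B,M}^{\,i(s)}(s)$ for each $s$, and then performing $|S|$ multiplications and additions.

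Next I would recall that, by the common-knowledge definition, $w_{B,M}^1(s)$ is the value to the row player of the two-player zero-sum normal-form game whose payoff matrix is the submatrix of $M(1,\cdot,\cdot)$ with rows indexed by $C_1(s) \cup \{\delta\}$ and columns indexed by $C_2(s) \cup \{\delta\}$. Because $M(1,a_1,a_2) + M(2,a_1,a_2) = 1$ for all $a_1,a_2$, this game is constant-sum with constant $1$, so $w_{B,M}^2(s) = 1 - w_{B,M}^1(s)$; hence it suffices to compute $w_{B,M}^1(s)$ for every $s$. Extracting the required submatrix is immediate given that $M$ is specified explicitly (a table of size $\Theta(|A_1 \cup \{\delta\}|\cdot|A_2 \cup \{\delta\}|)$, which is at most the instance's description length), and its dimensions are at most $(|A_1|+1)\times(|A_2|+1)$.

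Then I would invoke the classical fact that the value of a finite two-player zero-sum game is the optimum of a linear program of size polynomial in the payoff matrix: maximize $v$ subject to $\sum_{a_1} x_{a_1} M(1,a_1,a_2) \ge v$ for every admissible column $a_2$, $\sum_{a_1} x_{a_1} = 1$, and $x \ge 0$, with $a_1$ ranging over $C_1(s)\cup\{\delta\}$. Since linear programming is solvable in polynomial time (and, for rational data, returns an exact rational optimum of polynomially bounded bit-length), one obtains $w_{B,M}^1(s)$, and hence $w_{B,M}^2(s)$, in time polynomial in the submatrix size, thus in the instance size. Carrying this out once per scenario and summing $P(s)\,d_B(s)\,w_{B,M}^{\,i(s)}(s)$ over $s \in S$ yields $e^{\text{CKDG}}_B(M)$.

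The only point requiring care — and the closest thing to an obstacle — is the bookkeeping that every quantity stays polynomially bounded: $|S|$ and the matrix dimensions are bounded by the description length because $S$, $C_1$, $C_2$, and $M$ must all be written out; and the rational outputs of the per-scenario LPs, their products with $P(s)$ and $d_B(s)$, and the final sum of $|S|$ such rationals all retain polynomial bit-complexity by the standard bounds on LP solutions and on sums and products of polynomially many rationals. Nothing here is deep; the proof is essentially an assembly of two facts, that the error decomposes scenario by scenario and that zero-sum game values are LP-computable, so I expect the write-up to be short once the polynomiality accounting is spelled out.
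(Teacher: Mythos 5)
Your proposal is correct and follows essentially the same route as the paper's proof: decompose the error scenario by scenario and compute each $w^{i}_{B,M}(s)$ as the value of a polynomially sized two-player zero-sum game via linear programming. The extra detail you provide (constant-sum reduction to computing only $w^1_{B,M}(s)$ and the bit-complexity bookkeeping) is a fuller write-up of the same argument, not a different approach.
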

\begin{proof}
If we look at the formula for $ e^{\text{CKDG}}_B(M) $, we see that since we can easily access $ u $ and by extension $ d_B $ it suffices to compute $ w^{i_s}_{B,M}(s) $ for every $ s \in S $ and for some $ i_s \in \{1, 2\} $. Computing $ w^{i_s}_{B,M}(s) $ involves finding the value of a two-player zero-sum normal-form game with $ |C_1(s)| + 1 $ actions for agent $ 1 $ and $ |C_2(s)| + 1 $ actions for agent $ 2 $. We can thus compute $ e^{\text{CKDG}}_B(M) $ by finding the values of a polynomial number of polynomial-sized two-player zero-sum normal-form games, so we can find $ e^{\text{CKDG}}_B(M) $ in polynomial time.
\end{proof}

\begin{proposition}
\label{ckddg_eval}
Given a CKDDG $ B = (A, S, P, C_w, C_l) $ and a policy $ M $, we can compute $ e^{\text{CKDDG}}_B(M) $ in polynomial time.
\end{proposition}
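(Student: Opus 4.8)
The plan is to observe that this proposition is essentially an immediate corollary of Proposition~\ref{ckdg_eval} together with the definitions. Recall that the error of a CKDDG policy is \emph{defined} by $e^{\text{CKDDG}}_B(M) = e^{\text{CKDG}}_{B'}(M)$, where $B'$ is the CKDG induced by $B$, and that $M$ is already by definition a $B'$-policy. So it suffices to (i) confirm that $B'$ can be constructed from $B$ in polynomial time and has description length polynomial in that of $B$, and then (ii) invoke Proposition~\ref{ckdg_eval} on the pair $(B', M)$.

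For step (i), I would simply inspect the definition of the induced CKDG: we have $A_1' = A_2' = A$; the scenario set $S' = \{1,2\} \times S$ has exactly $2|S|$ elements; $P'$ is obtained from $P$ by halving each value (a constant-factor overhead per entry); $C_1'$ and $C_2'$ evaluated at a scenario $(i,s)$ just return $C_w(s)$ or $C_l(s)$, both already part of the input; and $u'$ is the trivial $0/1$ utility. Hence $B'$ has description length linear in that of $B$ and is computable from $B$ in linear time. The only point worth flagging at all is precisely this size check --- unfurling must not cause a super-polynomial blow-up --- and it does not, since it merely doubles the number of scenarios.

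For step (ii), applying Proposition~\ref{ckdg_eval} to $B'$ computes $e^{\text{CKDG}}_{B'}(M) = e^{\text{CKDDG}}_B(M)$ in time polynomial in the description length of $B'$, which is polynomial in that of $B$. Equivalently, one can bypass the reduction and compute directly from the explicit formula in the Remark preceding this proposition: for each $s \in S$, solve the two linear programs giving the zero-sum game values $w^1_{B',M}((2,s))$ and $w^2_{B',M}((1,s))$ (each a game with at most $|A|+1$ actions per agent), and return the $P$-weighted average of their means. That is $2|S|$ polynomial-size linear programs, hence again polynomial time. I do not expect any genuine obstacle here; all the substance lives in Proposition~\ref{ckdg_eval}, and this statement simply transports it across the CKDDG-to-CKDG translation after checking the translation is size-preserving.
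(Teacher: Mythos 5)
Your proposal is correct and matches the paper's proof: both reduce to Proposition~\ref{ckdg_eval} by noting that the induced CKDG $B'$ has description length linear in that of $B$, so $e^{\text{CKDDG}}_B(M) = e^{\text{CKDG}}_{B'}(M)$ is computable in polynomial time. The extra explicit computation via $2|S|$ zero-sum game values is just an unrolled version of the same argument.
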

\begin{proof}
If $ B' $ is the CKDG induced by $ B $ then the size of $ B' $ is polynomial (indeed linear) in the size of $ B $ so $ e^{\text{CKDDG}}_B(M) = e^{\text{CKDG}}_{B'}(M) $ can be computed in polynomial time.
\end{proof}

\begin{proposition}
\label{piddg_eval}
Given a PIDDG $ B = (A, S, P, C_w, C_l) $ and a policy $ M $, we can compute $ e^{\text{PIDDG}}_B(M) $ in polynomial time.
\end{proposition}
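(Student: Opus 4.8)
The plan is to reduce the computation of $e^{\text{PIDDG}}_B(M)$ to solving two linear programs of size polynomial in the description length of $B$. By definition $e^{\text{PIDDG}}_B(M) = \tfrac12\big(v_1(G_2(B,M)) + v_2(G_1(B,M))\big)$, so it suffices to show that the value of each of the Bayesian games $G_1(B,M)$ and $G_2(B,M)$ to each agent can be computed in polynomial time. The two games have identical structure, with the roles of $C_w$ and $C_l$ interchanged, so I would only describe the computation for $G = G_1(B,M)$. The first thing to observe is that $G$ is constant-sum: in every entry of every type profile the two agents' utilities sum to $1$ — this is immediate from $M(1,\cdot,\cdot)+M(2,\cdot,\cdot)=1$ in the ``both actions available'' case, and it holds by inspection in the ``$0$ and $1$'' and ``$\tfrac12$ and $\tfrac12$'' cases. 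Hence $G$ is strategically equivalent to a finite two-player zero-sum game (its strategic form, in which a pure strategy of an agent is a map from that agent's types to $A\cup\{\delta\}$), so by the minimax theorem its value $v_1(G)$ to agent $1$ is well defined and $v_2(G) = 1 - v_1(G)$; it therefore suffices to compute $v_1(G)$.

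The apparent obstacle is that the type set $\mathcal{P}(A)$ is exponentially large, so that the naive strategic form has exponentially many pure strategies. This is resolved by the observation that in $G$ only the types in $\{C_w(s): s\in S\}$ (for agent $1$) and $\{C_l(s): s\in S\}$ (for agent $2$) occur with positive probability; there are at most $|S|$ of each, and a behavioral strategy of an agent is determined by its values on these realized types. I would then write the LP computing $\max_{\sigma_1}\min_{\sigma_2} u_1(\sigma_1,\sigma_2)$ directly. The variables are the numbers $\sigma_1(t_1)(a_1)\ge 0$, one for each realized type $t_1$ of agent $1$ and each $a_1\in A\cup\{\delta\}$, subject to $\sum_{a_1}\sigma_1(t_1)(a_1)=1$, together with, for each realized type $t_2$ of agent $2$, a variable $y_{t_2}$; the constraints are $y_{t_2}\le \sum_{s:\,C_l(s)=t_2} P(s)\sum_{a_1}\sigma_1(C_w(s))(a_1)\,U_1(a_1,a_2,C_w(s),t_2)$ for every $a_2\in A\cup\{\delta\}$, where $U_1$ is the explicitly computable payoff function of $G$; and the objective is to maximize $\sum_{t_2} y_{t_2}$. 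This is correct because, once $\sigma_1$ is fixed, agent $2$'s expected payoff decomposes as a sum over its types (its posterior given a type, hence the contribution of that type, does not depend on how it plays other types), so agent $2$ best-responds type by type and can do so with a pure action; consequently $\sum_{t_2}\min_{a_2}(\cdots)$ is exactly agent $1$'s payoff against a best-responding agent $2$, and — after the $\Pr(t_2)$ factors cancel against the conditioning — this is precisely what the LP's optimal value computes, which by the minimax theorem equals $v_1(G)$.

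Finally I would do the bookkeeping: the LP has $O(|S|(|A|+1))$ variables and $O(|S|(|A|+1))$ constraints, and each coefficient is a sum, over a subset of $S$, of products of input quantities ($P(s)$ and entries of $M$), hence is computable in polynomial time and has polynomial bit-length. Since linear programs are solvable in polynomial time, $v_1(G_1(B,M))$ and, symmetrically, $v_1(G_2(B,M))$ can be computed in polynomial time; then $v_2(G_1(B,M)) = 1 - v_1(G_1(B,M))$ and $e^{\text{PIDDG}}_B(M) = \tfrac12\big(v_1(G_2(B,M)) + 1 - v_1(G_1(B,M))\big)$ follows. The step I expect to require the most care is the one just described: verifying that restricting attention to realized types loses nothing, and that the ``inner min over agent $2$'s responses'' is faithfully captured by the linear constraints on the $y_{t_2}$ — in particular, justifying the per-type decomposition of agent $2$'s best response and confirming that the constant-sum structure makes this a genuine zero-sum value rather than an equilibrium-selection-dependent quantity.
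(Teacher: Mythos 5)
Your proposal is correct and follows essentially the same route as the paper, whose proof is just a brief sketch asserting that the two zero-sum Bayesian games have polynomial-size descriptions and that their values can be found via two polynomial-sized linear programs. Your write-up simply fills in the details the paper leaves implicit (restriction to the at most $|S|$ realized types, the per-type best-response decomposition for the opponent, and the explicit LP), and those details are sound.
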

\begin{proof}
$ e^{\text{PIDDG}}_B(M) $ is the average of the values of two two-player zero-sum Bayesian games. These two-player zero-sum Bayesian games both have description lengths polynomial in the description lengths of $ B, M $ so we can compute their values in polynomial time using two polynomial-sized linear programs.
\end{proof}

While evaluating the error of an individual policy is important, we are more likely to care about optimizing over policies to minimize the error. Unfortunately, for this kind of problem we derive several hardness results. In particular, we prove that it is NP-complete to determine whether a CKDG, CKDDG, or PIDDG has a policy with zero error (hereafter called a \textit{perfect policy}). Note that this hardness result extends to computing the minimum possible error and finding a minimizing policy in the natural way. 

Let's start by considering CKDG's with injective action availability functions (we add this injectivity constraint to aid the jump from CKDDG hardness to PIDDG hardness later on):
\begin{lemma}
\label{ckdg_hardness}
It's NP-hard to determine whether a CKDG with injective action availability functions has a perfect policy.
\end{lemma}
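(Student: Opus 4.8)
The plan is to reduce from a known NP-complete problem whose structure mirrors the combinatorial core of a CKDG, and the natural candidate is a graph-coloring-type or SAT-type problem; I would aim for a reduction from 3-SAT (or, if it turns out cleaner, from graph 3-coloring). The key observation driving the reduction is that a CKDG has a perfect policy exactly when the principal can, for every scenario $s$, guarantee that the desired winner wins the associated zero-sum normal-form game with value $1$ — equivalently, the desired winner must have some available action $a_w \in C_w(s) \cup \{\delta\}$ (I'll write $C_w$ for whichever of $C_1, C_2$ is desired in $s$) that the policy makes beat every action available to the desired loser. Since $\delta$ is always available to both, the policy's behavior on $\delta$-vs-$\delta$ already forces a trade-off across scenarios: in a scenario where agent $1$ is desired, we'd want $M(1,\delta,\delta)=1$, but in a scenario where agent $2$ is desired we'd want $M(1,\delta,\delta)=0$. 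So the first step is to set up a gadget in which these global commitments encode a truth assignment (or a coloring).

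Concretely, I would let each variable (or vertex) correspond to a small block of actions, and design scenarios so that a perfect policy is forced to orient the "dominance" relation among actions in each block consistently — say, action $a$ must beat action $b$ for the policy to handle one scenario, while $b$ must beat $a$ to handle another, unless a third action is used — thereby encoding a binary choice. Clause scenarios (or edge scenarios) would then give the desired loser an action set that can only be dominated if at least one of the relevant variable-choices came out the right way. The injectivity requirement on $C_1, C_2$ is a mild annoyance rather than a real obstacle: since each $C_i$ is a function $S \to \mathcal{P}(A_i)$, I just need distinct scenarios to receive distinct action subsets for each agent, which I can arrange by padding each scenario with a unique dummy action (or a unique dummy element forced into the set), being careful that these dummies are dominated trivially by the policy and don't interfere with the gadget logic. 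I should double-check that adding dummies keeps the "desired winner has a uniformly-dominating action" condition intact, which is routine since a fresh action appearing in only one scenario can be handled in isolation by the policy.

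The direction "perfect policy $\Rightarrow$ satisfying assignment" should follow by reading off the forced orientations from the policy and verifying the clause scenarios are satisfied; the direction "satisfying assignment $\Rightarrow$ perfect policy" requires constructing the policy explicitly from the assignment and checking scenario by scenario that the desired winner has a value-$1$ strategy — this is where I must be careful that no two scenarios impose contradictory demands on the same entry $M(i,a_1,a_2)$, so the bulk of the verification is a consistency argument across the scenario family. Membership in NP is immediate from Proposition~\ref{ckdg_eval}: guess a policy (it suffices to guess a deterministic one once we've shown $\{0,1\}$-valued policies suffice here, or just guess a polynomial-size rational policy and verify) and evaluate its error in polynomial time, checking it is zero. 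The main obstacle I anticipate is engineering the gadget so that the $\delta$-action's unavoidable presence in every scenario is harnessed rather than fought — i.e., making sure the default action can't be used as an escape hatch that trivially dominates things and collapses the reduction — so I'd likely need a few "calibration" scenarios that pin down the policy's treatment of $\delta$ against ordinary actions before the variable and clause gadgets can do their work.
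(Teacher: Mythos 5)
Your high-level plan coincides with the paper's route (a SAT reduction, built on the observation that a perfect policy requires, in every positive-probability scenario with a strict preference, that the desired winner have a single action the policy makes beat every action available to the desired loser), but the proposal never actually produces the reduction. The gadgets are described only as intentions ("design scenarios so that a perfect policy is forced to orient the dominance relation\ldots"), and both directions of the equivalence, which you yourself identify as "the bulk of the verification," are deferred. That is where the entire content of the lemma lies: the paper's construction takes variable-indexed scenarios in which the desired winner (agent $1$) has exactly the two actions $(k,\bot),(k,\top)$ while the desired loser has clause-tagged copies $\{1,\dots,m\}\times\{k\}$ of that variable, and clause-indexed scenarios in which the desired winner (agent $2$) has $\{k\}\times J_k$ while the desired loser has the satisfying literals of $K_k$; perfection then forces the per-scenario dominating actions to encode one consistent truth value per variable and a satisfied literal per clause, with the default action $\delta$ handled inside that analysis (no "calibration scenarios" are needed). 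Without a concrete construction and the cross-scenario consistency argument, this is a plan, not a proof.

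Moreover, the one concrete device you do commit to --- securing injectivity of $C_1,C_2$ by padding each scenario with a scenario-unique dummy action --- is unsound as stated. The policy is chosen after the instance is fixed, so you cannot stipulate that "these dummies are dominated trivially by the policy": if the desired winner of a scenario receives a fresh action occurring in no other scenario, a policy may declare that action to beat everything, winning that scenario in isolation; doing so in every scenario yields a perfect policy regardless of satisfiability, which destroys the hardness direction. Your remark that "a fresh action appearing in only one scenario can be handled in isolation by the policy" is precisely the problem, not the justification. Padding only the desired loser's sets avoids this trap (extra loser actions can only hurt), but it does not by itself make $C_i$ injective on the scenarios where agent $i$ is the desired winner. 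The paper sidesteps all of this by making the construction intrinsically injective --- each agent's available set determines the scenario --- which is also exactly the property needed later to transfer the hardness to CKDDGs and PIDDGs.
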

\begin{proof}
Let's reduce from SAT. Suppose $ x_1, ..., x_n $ are variables and suppose that for every $ i \in \{1, ..., m\} $, $ K_i = \underset{j \in J_i}{\bigvee} (n_{i,j} \oplus x_j) $ is a clause where $ J_i \subseteq \{1,...,n\} $ and $ n_{i,j} \in \{\bot, \top\} $ for all $ j \in J_i $. Without loss of generality we can assume that there is at least one clause, no two clauses are identical, and each clause references at least two distinct variables. Let's build a CKDG $ (A_1, A_2, S, P, C_1, C_2, u) $ with $ A_1 = \{1, ..., n\} \times \{\bot,\top\} $, $ A_2 = \{1, ..., m\} \times \{1, ..., n\} $, $ S = \left ( \{1\} \times \{1,...,n\} \right ) \cup \left ( \{2\} \times \{1, ..., m\} \right ) $, and $ P $ uniform over $ S $. We can let $ C_1((1,k)) = \{(k,\bot),(k,\top)\} $ and $ C_2((1,k)) = \{1, ..., m\} \times \{k\} $. We can let $ C_1((2,k)) = \{(j, \neg n_{k,j}) : j \in J_k\} $ and $ C_2((2,k)) = \{k\} \times J_k $. Finally, we can let $ u(p,(w,k)) $ be $ 1 $ if $ p = w $ and $ 0 $ otherwise.

This CKDG has a perfect policy if and only if $ K_1 \land ... \land K_m $ is satisfiable. Intuitively, if $ x_1, ..., x_n $ is a satisfying assignment for $ K_1 \land ... \land K_m $ then the policy $ M(1,a_1,a_2) = \begin{cases} 1 & \exists i,j[a_1 = (j,x_j) \land a_2 = (i,j)] \\ 0 & \text{otherwise} \end{cases} $ is perfect (if the row player is the desired favorite in a variable-indexed scenario then they can play their variable and its value, if the column player is the desired favorite in a clause-indexed scenario then they can play their clause and the variable whose literal makes the clause satisfied). Furthermore, any perfect policy must closely resemble this construction. For a detailed proof, see appendix B.1. Furthermore, $ C_1 $ and $ C_2 $ are injective as is also argued in appendix B.1.
\end{proof}

Now we can show the same result for CKDDG's:

\begin{lemma}
\label{ckddg_hardness}
It's NP-hard to determine whether a CKDDG with injective action availability functions has a perfect policy.
\end{lemma}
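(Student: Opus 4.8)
The plan is to reduce from the problem shown NP-hard in Lemma~\ref{ckdg_hardness}, namely deciding whether a CKDG with injective action-availability functions has a perfect policy, and I would lean on the specific form of the instances produced there: each scenario $s$ has a unique desired winner $w(s)$ (with $d_B(s)=1$, so a perfect policy is one where the desired winner wins with probability $1$ in every scenario), the sets $C_{w(s)}(s)$ and $C_{l(s)}(s)$ are always nonempty, and — crucially — a perfect policy, when one exists, can be taken so that in every scenario the desired winner wins using a single pure action other than $\delta$. Given such a CKDG $B=(A_1,A_2,S,P,C_1,C_2,u)$, I would build a CKDDG $D=(A^D,S^D,P^D,C_w^D,C_l^D)$ with $A^D=A_1\sqcup A_2$ (a disjoint union, so that each agent's ``half'' of the action space is tracked), $S^D=S$, $P^D=P$, and $C_w^D(s)=C_{w(s)}(s)$, $C_l^D(s)=C_{l(s)}(s)$. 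This is a polynomial-time transformation, and $C_w^D,C_l^D$ are injective: in any scenario their values lie in disjoint halves of $A^D$ and are nonempty, so equality of two such sets forces the same desired-winner role, after which injectivity of $C_1,C_2$ gives equality of scenarios.

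Next I would analyze the CKDG $D'$ induced by $D$, whose scenarios are $\{1,2\}\times S$. The structural fact I would use is that scenario $(w(s),s)$ of $D'$ is the very same normal-form game as scenario $s$ of $B$ (the desired winner holds $C_{w(s)}(s)$, the loser holds $C_{l(s)}(s)$), while the ``mirror'' scenario $(l(s),s)$ is that same game with the roles of agents $1$ and $2$ interchanged; and that in every scenario of $D'$ one agent's available actions lie in $A_1\cup\{\delta\}$ and the other's in $A_2\cup\{\delta\}$. Hence any policy for $D'$ splits into a block on $(A_1\cup\{\delta\})\times(A_2\cup\{\delta\})$ and a block on $(A_2\cup\{\delta\})\times(A_1\cup\{\delta\})$ that overlap only in the single entry at $(\delta,\delta)$. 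The easy direction of the equivalence is then immediate: if $D$ has a perfect policy $M$, restrict $M$ to the first block to get a $B$-policy $M_B$; since scenario $(w(s),s)$ of $D'$ is literally scenario $s$ of $B$ and $M$ agrees with $M_B$ on every entry that game touches, the desired winner of $s$ wins with probability $1$ under $M_B$, so $M_B$ is perfect for $B$.

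For the converse I would start from a perfect policy $M_B$ of $B$ of the good form above and define $M$ on $D'$ by: copying $M_B$ on the first block ($M(1,a,b)=M_B(1,a,b)$ for $a\in A_1\cup\{\delta\}$, $b\in A_2\cup\{\delta\}$); a transpose-with-role-swap on the second block ($M(1,a,b)=M_B(2,b,a)$, equivalently $M(2,a,b)=M_B(1,b,a)$, for $a\in A_2\cup\{\delta\}$, $b\in A_1\cup\{\delta\}$); $M(1,\delta,\delta)=\tfrac12$; and arbitrary (say $\tfrac12$) values on the ``same half'' pairs, which are never reached in any scenario of $D'$. I would then check scenario by scenario that whichever agent is the desired winner in a given scenario of $D'$ can translate the desired winner's pure $\delta$-free winning action from the corresponding scenario of $B$ into a pure winning action of its own — it still beats every opponent action, including $\delta$, because the relevant block of $M$ reproduces exactly the desired winner's payoff row from $B$ — so that agent wins with probability $1$ and $M$ is perfect for $D$.

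The delicate point, and the step I expect to need the most care, is exactly this shared $(\delta,\delta)$ entry. Because $D'$ is symmetric under swapping the agents, it contains both $(1,s)$ and $(2,s)$ for every $s$; if in $B$ the desired winner of some scenario were forced to play $\delta$, then one of these would demand $M(1,\delta,\delta)=1$ and the other $M(2,\delta,\delta)=1$, which is impossible. This is precisely why the reduction must use a perfect policy of $B$ in which the desired winner never needs $\delta$, and the thing to nail down carefully is that the construction of Lemma~\ref{ckdg_hardness} indeed admits such a policy — that in each variable-indexed and clause-indexed gadget scenario the desired winner has a pure, $\delta$-avoiding action that beats all of the opponent's actions (and that $\delta$ can simultaneously be made a losing play for whoever submits it). Once that is verified, the two directions above combine into ``$B$ has a perfect policy iff $D$ has a perfect policy,'' and since $D$ has injective action-availability functions, the reduction is complete.
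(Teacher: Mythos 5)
Your reduction is correct in outline, but it takes a genuinely different route from the paper's. The paper reduces black-box from the general problem of Lemma~\ref{ckdg_hardness}: given \emph{any} CKDG with injective availability functions, it first restricts to the scenarios with positive probability and non-indifferent utilities, and then builds the CKDDG over the tagged action set $A=(\{1\}\times(A_1\cup\{\delta\}))\cup(\{2\}\times(A_2\cup\{\delta\}))$, i.e.\ it embeds each agent's default action of the original game as an \emph{ordinary} tagged action of the CKDDG. This is exactly how the paper dodges the shared-$(\delta,\delta)$ conflict you identify: a CKDG-perfect action that happens to be $\delta$ is still usable, and the price is paid in the converse direction, where the paper proves a nontrivial claim (at least one of the two ``sides'' never needs a default-type perfect action) and then reconstructs a CKDG policy via a $\max/\min$ patching of the $\delta$ rows and columns. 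You instead keep the two $\delta$'s identified, copy the policy on one block and transpose-with-role-swap on the other, and push the difficulty into a structural property of the specific Lemma~\ref{ckdg_hardness} instances: that whenever they admit a perfect policy they admit one in which the desired winner wins with a pure non-$\delta$ action that also beats the opponent's $\delta$. That property does hold, and your restriction/transposition argument then goes through (your injectivity argument is also fine, since the availability sets in those instances are nonempty and live in disjoint halves of $A_1\sqcup A_2$), so your proof is essentially complete once that lemma is nailed down --- but note it is not entirely free: the explicit policy exhibited in the proof of Lemma~\ref{ckdg_hardness} does not, as literally written, award the win to the desired winner against an opponent playing $\delta$, so you must augment it (e.g.\ decree that any non-default action defeats the opponent's default and that $\delta$ versus $\delta$ is irrelevant because the winner never plays it) and re-verify perfection. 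In exchange for this instance-specific dependence and the dangling verification, your route buys a noticeably simpler construction (no tagged default actions, no $\max/\min$ reconstruction, and a cleaner two-block picture of the induced CKDG), whereas the paper's version is more robust: it proves the stronger statement that the transformation works for every injective CKDG, independent of how the hardness of that problem was established.
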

\begin{proof}
Let's reduce from determining whether a CKDG with injective action availability functions has a perfect policy (which we know to be NP-hard by Lemma \ref{ckdg_hardness}). Suppose $ (A_1, A_2, S, P, C_1, C_2, u) $ is a CKDG where $ C_1, C_2 $ are injective. Consider the set of scenarios which sometimes occur and on which we are not indifferent, $ S' = \left \{ s' \in S : P(s') > 0 \land u(1,s') \neq u(2,s') \right \} $. If $ S' = \emptyset $ then every policy for our original CKDG is perfect. Otherwise, consider the CKDDG $ (A, S', P', C_w, C_l) $ where $ A = (\{1\} \times (A_1 \cup \{\delta\})) \cup (\{2\} \times (A_2 \cup \{\delta\})) $, $ P' $ is uniform over $ S' $, and for $ s' \in S' $ if $ u(1,s') > u(2,s') $ then $ C_w(s') = \{1\} \times (C_1(s') \cup \{\delta\}), C_l(s') = \{2\} \times (C_2(s') \cup \{\delta\}) $ but if $ u(1,s') < u(2,s') $ then $ C_w(s') = \{2\} \times (C_2(s') \cup \{\delta\}), C_l(s') = \{1\} \times (C_1(s') \cup \{\delta\}) $. There is a perfect policy for this new CKDDG if and only if there is a perfect policy for the original CKDG. Furthermore, $ C_w $ and $ C_l $ are injective. For proofs of these claims, see appendix B.2.
\end{proof}

Finally, we are ready to prove NP-completeness for all of our problems:

\begin{theorem}
\label{everything_completeness}
It's NP-complete to determine whether a CKDG, CKDDG, or PIDDG has a perfect policy.
\end{theorem}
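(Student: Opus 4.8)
The plan is to assemble the theorem from pieces already in hand. Membership in NP will come from the polynomial-time evaluation results (Propositions~\ref{ckdg_eval}, \ref{ckddg_eval}, \ref{piddg_eval}), and NP-hardness for all three classes will come from Lemmas~\ref{ckdg_hardness} and~\ref{ckddg_hardness} together with Remark~\ref{injective}. The only real content is producing a polynomial-size certificate of perfect-policy existence, and checking it with the evaluation procedures.

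For membership, the first step is to argue that whenever a perfect policy exists there is a perfect policy whose image lies in $\{0,1\}$; such a policy has polynomial description length and can be verified in polynomial time by running the appropriate evaluation procedure and testing whether the returned error equals $0$. To see that $\{0,1\}$-valued perfect policies suffice, I would take an arbitrary perfect policy $M$ and round it, setting $M'(1,a_1,a_2)=1$ exactly when $M(1,a_1,a_2)=1$ and $M'(1,a_1,a_2)=0$ otherwise (with $M'(2,\cdot,\cdot)=1-M'(1,\cdot,\cdot)$). In the CKDG case (hence also CKDDG, via the induced CKDG), perfection of $M$ means that in each scenario the desired winner has a \emph{pure} action $a^\ast$ with $M(\text{winner},a^\ast,a_{\text{opp}})=1$ for every available opponent action --- a mixed strategy can attain the maximal payoff $1$ only if every action in its support already does --- and the entries forced to $0$ for the desired winner (coming from the scenarios in which the other agent wins) are left at $0$; rounding preserves all of these, so $M'$ is still perfect. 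For PIDDGs the same conclusion needs a little more care because of the Bayesian structure: perfection forces the desired winner's value in its own game to be $1$, and since all payoffs are at most $1$ the corresponding value-$1$ strategy must give payoff exactly $1$ on every realization; testing it against the opponent's all-$\delta$ strategy shows its support avoids unavailable actions, and testing against arbitrary pure available opponent actions pins down enough entries of $M$ at $1$ (and dually at $0$ from the other game) that $M'$ stays perfect. These rounding details would be deferred to an appendix.

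For NP-hardness, the CKDG case is immediate from Lemma~\ref{ckdg_hardness} and the CKDDG case from Lemma~\ref{ckddg_hardness}. For PIDDGs, I would take a CKDDG $B=(A,S,P,C_w,C_l)$ with injective $C_w,C_l$ --- the class already shown NP-hard by Lemma~\ref{ckddg_hardness} --- and regard the identical tuple as a PIDDG $B'$ (Definitions~\ref{ckddg} and~\ref{piddg} share the same tuple format). By Remark~\ref{injective}, injectivity of $C_w$ and $C_l$ lets the agents always deduce the scenario, so $B'$ behaves exactly like $B$; in particular $B'$ has a perfect policy iff $B$ does, and the (trivial, identity) reduction is polynomial. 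Hence deciding perfect-policy existence for PIDDGs is NP-hard, completing the proof.

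The step I expect to be the main obstacle is the membership direction, specifically the PIDDG subcase: one must be sure that a polynomial-size object genuinely certifies perfect-policy existence even though the definition of error routes through equilibrium/value computations in two Bayesian games, i.e., that rounding a perfect PIDDG policy to a $\{0,1\}$-valued one preserves perfection. Everything else is bookkeeping on top of the earlier lemmas.
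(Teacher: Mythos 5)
Your proof is correct and takes essentially the same route as the paper: NP-hardness for all three classes via Lemmas~\ref{ckdg_hardness} and~\ref{ckddg_hardness} together with Remark~\ref{injective} (the identity reduction from injective-availability CKDDGs to PIDDGs, then generality of CKDGs), and NP membership by using a policy as the certificate and verifying it with the polynomial-time evaluation results (Propositions~\ref{ckdg_eval}, \ref{ckddg_eval}, \ref{piddg_eval}). Your rounding argument that a perfect policy can be taken $\{0,1\}$-valued (in both the CKDG/CKDDG and PIDDG cases) is a more careful justification of the certificate's polynomial size than the paper's one-line assertion that ``policies can be expressed in polynomial space,'' and it is sound.
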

\begin{proof}
Lemma \ref{ckddg_hardness} tells us that it's NP-hard to tell whether a CKDDG with injective action availability functions has a perfect policy. However, Remark \ref{injective} tells us that such a CKDDG behaves identically to a PIDDG (as the available actions reveal everything). Thus it's NP-hard to tell whether a PIDDG with injective action availability functions has a perfect policy. As a result, we have special cases of CKDDG's and PIDDG's where it's NP-hard to tell whether perfect policies exist, so it's also hard for CKDG's which generalize CKDDG's.

Finally, policies can be expressed in polynomial space so Propositions \ref{ckdg_eval}, \ref{ckddg_eval}, and \ref{piddg_eval} show that evaluating the error of a given policy can be done in polynomial time, so we have membership in NP as well.
\end{proof}

\section{Error Bounds}

In spite of the hardness results, finding policies with extremely low error is often feasible if we expect the agent we want to win to possess many more actions than the agent we want to lose. 

Let's begin with the CKDDG case. We will consider policies which randomly highlight actions for both agents and which force agents to play available actions that are highlighted. We will consider randomly assigning a winner to each pair of highlighted actions. To see why this technique is promising, suppose that in some scenario the desired loser has access to $ 100 $ actions and the desired winner has access to $ 10,000 $ actions. Suppose we highlight actions with probability $ 0.01 $ and suppose that our desired loser ends up with $ 1 $ highlighted action and the desired winner ends up with $ 100 $ highlighted actions. Then if our policy favors the desired winner on any of the $ 100 $ intersection points between highlighted available actions, the desired winner can guarantee a victory. In this case, the probability that our policy would favor the desired loser is only $ 2^{-100} $. Note that this reasoning requires that the desired winner knows both the policy and the desired loser's available actions.

\begin{theorem}
\label{ckddg_upper_bound}
Suppose that $ (A, S, P, C_w, C_l) $ is a CKDDG and suppose that $ m, n \in \mathbb{N}^{+}, \varepsilon \in [0, 1] $ are such that if $ s $ is sampled from $ S $ according to $ P $ then the probability that $ |C_w(s)| < m $ or that $ |C_l(s)| > n $ is at most $ \varepsilon $. Suppose $ m > n $ and define $ r = \frac{m}{n} $. Let $ \alpha \in (0, 1] $ and let $ k, l \in \mathbb{N}^+ $. Let $ \left ( \mu_i \right )_{i = 1}^{k} $ be a non-increasing sequence of elements of $ (0, 1] $ and let $ \left ( \nu_j \right )_{j = 1}^{l} $ be a non-decreasing sequence of elements of $ \left [ \frac{\alpha}{\log_2(r)}, \infty \right ) $. For every $ i \in \{1, ..., k\}, j \in \{1, ..., l\} $ define $ \theta_{i,j} = e^{-\mu_i \alpha r^{1 - \nu_j}} $, $ \zeta_i = e^{-\alpha r} \left ( \frac{e}{\mu_i} \right )^{\mu_i \alpha r} $, and $ \xi_j = e^{-\alpha} r^{-\nu_j \log_2 \left ( \frac{\nu_j \log_2(r)}{e \alpha} \right )} $. Further define $ \zeta_0 = \xi_0 = 1 $. Then there exists a policy with error at most $ \sum_{i = 1}^{k} \sum_{j = 1}^{l} \theta_{i,j} \zeta_{i-1} \xi_{j-1} + \zeta_{k} + \xi_{l} + \varepsilon $.
\end{theorem}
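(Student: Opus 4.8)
The plan is to argue by the probabilistic method, analysing the randomized ``highlighting'' policy sketched before the theorem. Fix the highlighting probability $p=\alpha/n$. Sample a random set $H\subseteq A$ that includes each action independently with probability $p$ (the default $\delta$ is never highlighted), and independently sample a uniformly random tournament (orientation of the complete graph) on $H$; write $\omega$ for this pair. Let $M_\omega$ be the $\{0,1\}$-valued policy under which an agent playing an action of $H$ beats an agent playing anything outside $H$ (in particular $\delta$), the tournament decides any pair of distinct actions of $H$, and all remaining cases are split evenly. Since the tournament relation is antisymmetric, $M_\omega$ is invariant under swapping the two agents, so in the CKDG induced by $B$ the two de-anonymized copies $(1,s)$ and $(2,s)$ of each anonymized scenario $s$ contribute the same amount to the error. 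Hence it suffices to bound, for each ``good'' scenario $s$ (one with $|C_w(s)|\ge m$ and $|C_l(s)|\le n$), the $\omega$-expected probability that the desired loser wins at $(1,s)$, and then to add $\varepsilon$ to absorb the total $\le\varepsilon$ probability of ``bad'' scenarios, on which I bound the contribution by $1$; by the probabilistic method some realization of $\omega$ then yields a policy meeting the claimed bound.

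Fix a good scenario $s$, and set $W=C_w(s)\cap H$, $L=C_l(s)\cap H$, $X=|W|$, $Y=|L|$, so $\mathbb{E}[X]=p\,|C_w(s)|\ge\alpha r$ and $\mathbb{E}[Y]=p\,|C_l(s)|\le\alpha$. The crucial observation is that if the desired winner holds a highlighted available action that beats, in the drawn tournament, every highlighted available action of the desired loser, then playing it wins with probability $1$ (it also beats $\delta$ and every non-highlighted action), so the desired loser's value is $0$; thus that value is at most the indicator that no such ``dominating'' action exists. Conditioning on $H$ and assuming for the moment $C_w(s)\cap C_l(s)=\emptyset$ (the overlapping case, where only highlighted actions outside $L$ can dominate, needs separate treatment), each of the $X$ actions of $W$ independently beats all of $L$ with tournament-probability $2^{-Y}$, so the conditional probability that no dominating action exists is $(1-2^{-Y})^{X}\le e^{-X 2^{-Y}}$.

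Now stratify the randomness. Put $\mu_0=1$, and in the space of $\omega$ define the decreasing events $\mathcal A_i=\{X<\mu_i\alpha r\}$ for $i=1,\dots,k$ and the increasing events $\mathcal D_j=\{Y\ge\nu_j\log_2 r\}$ for $j=1,\dots,l$, with $\mathcal A_0=\mathcal D_0$ the whole space. Since $(\mu_i)$ is non-increasing and $(\nu_j)$ non-decreasing, each family is nested, so the complement of $\mathcal A_k\cup\mathcal D_l$ is the disjoint union over $i\le k$, $j\le l$ of the cells $\mathcal C_{i,j}=(\mathcal A_{i-1}\setminus\mathcal A_i)\cap(\mathcal D_{j-1}\setminus\mathcal D_j)$. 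On $\mathcal C_{i,j}$ one has $X\ge\mu_i\alpha r$ (so in particular $X\ge1$ and the winner does have a highlighted action) and $Y<\nu_j\log_2 r$, hence $X 2^{-Y}\ge\mu_i\alpha r\cdot r^{-\nu_j}=\mu_i\alpha r^{1-\nu_j}$, so the conditional no-dominating-action probability there is at most $\theta_{i,j}$. Splitting the $\omega$-expectation of the desired loser's value across $\mathcal A_k\cup\mathcal D_l$ (where we use the bound $1$) and the cells $\mathcal C_{i,j}$, and using that $\mathcal C_{i,j}$ depends only on $H$, we obtain that this expectation is at most $\mathbb{P}[\mathcal A_k]+\mathbb{P}[\mathcal D_l]+\sum_{i=1}^{k}\sum_{j=1}^{l}\theta_{i,j}\,\mathbb{P}[\mathcal C_{i,j}]$.

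It remains to estimate the probabilities. As $\mathcal A_{i-1}$ is a decreasing and $\mathcal D_{j-1}$ an increasing event in the independent highlight indicators, the Harris/FKG inequality gives $\mathbb{P}[\mathcal C_{i,j}]\le\mathbb{P}[\mathcal A_{i-1}\cap\mathcal D_{j-1}]\le\mathbb{P}[\mathcal A_{i-1}]\,\mathbb{P}[\mathcal D_{j-1}]$. For $i-1\ge1$, $X$ is binomial with mean $\lambda\ge\alpha r\ge\mu_{i-1}\alpha r$, and the Chernoff lower-tail bound $\mathbb{P}[X<a]\le(\lambda/a)^{a}e^{a-\lambda}$ at $a=\mu_{i-1}\alpha r$ is decreasing in $\lambda\ge a$, hence maximal at $\lambda=\alpha r$, giving precisely $\mathbb{P}[\mathcal A_{i-1}]\le e^{-\alpha r}(e/\mu_{i-1})^{\mu_{i-1}\alpha r}=\zeta_{i-1}$, while $\mathbb{P}[\mathcal A_0]=1=\zeta_0$. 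Symmetrically, for $j-1\ge1$, $Y$ is binomial with mean $\le\alpha\le\nu_{j-1}\log_2 r$, and the Chernoff upper-tail bound $\mathbb{P}[Y\ge b]\le(e\lambda/b)^{b}e^{-\lambda}$ at $b=\nu_{j-1}\log_2 r$ is increasing in $\lambda\le b$, hence maximal at $\lambda=\alpha$, giving $e^{-\alpha}\left(e\alpha/(\nu_{j-1}\log_2 r)\right)^{\nu_{j-1}\log_2 r}=\xi_{j-1}$ after the rewriting $x^{\log_2 r}=r^{\log_2 x}$, while $\mathbb{P}[\mathcal D_0]=1=\xi_0$; the same estimates give $\mathbb{P}[\mathcal A_k]\le\zeta_k$ and $\mathbb{P}[\mathcal D_l]\le\xi_l$. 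Substituting, averaging over good scenarios, and adding $\varepsilon$ for the bad ones yields the stated bound. The conceptual skeleton — random highlights, a fresh random tournament, bounding the loser's value by the ``no dominating action'' indicator, and the product step — is routine; the parts needing real care, and where I expect the main obstacle, are (i) obtaining the two Chernoff estimates in exactly the worst-case-mean form that produces the closed expressions $\zeta_i,\xi_j$ (the monotonicity-in-$\lambda$ claims and the $x^{\log_2 r}=r^{\log_2 x}$ rewriting), and (ii) the overlapping case $C_w(s)\cap C_l(s)\ne\emptyset$, which requires a separate argument handling the highlighted actions common to both agents.
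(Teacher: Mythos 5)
Your proposal follows the same high-level strategy as the paper (random highlighting with probability $\alpha/n$, a random orientation among highlighted actions, bounding the desired loser's value by the probability that no ``dominating'' highlighted action exists, and then a stratified Chernoff analysis whose tail estimates match the paper's exactly, including the worst-case-mean monotonicity and the $x^{\log_2 r}=r^{\log_2 x}$ rewriting). However, there is a genuine gap, and you have flagged it yourself: the overlapping case $C_w(s)\cap C_l(s)\neq\emptyset$, which the theorem does \emph{not} exclude. In your construction you highlight a subset $H\subseteq A$ of \emph{actions} and draw an antisymmetric tournament on $H$, with the diagonal split evenly. When an action $a$ is available to both agents and highlighted, it contributes to both $W$ and $L$ but can never dominate $L$ (it ties against itself), and the number of candidate dominating actions drops from $X=|W|$ to $|W\setminus L|$. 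The conditional no-domination probability is then $\left(1-2^{-Y}\right)^{|W\setminus L|}$ rather than $\left(1-2^{-Y}\right)^{X}$, and on your cell $\mathcal{C}_{i,j}$ this only gives $e^{-(\mu_i\alpha r-\nu_j\log_2 r)r^{-\nu_j}}$, which is strictly larger than $\theta_{i,j}$. So the deferred ``separate argument'' is not a small patch inside your setup: as it stands your randomization does not deliver the stated constants when the action sets overlap (moreover $X$ and $Y$ become dependent, though your FKG/Harris step does handle that part correctly).

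The paper avoids the issue at the level of the construction rather than the analysis: it highlights \emph{agent-action pairs} in $\{1,2\}\times A$ (so the same action $a$ gets two independent highlight coins, one per agent), and for every highlighted-highlighted pair $(a_1,a_2)$ it draws an independent winner $m_{a_1,a_2}\in\{1,2\}$, with no antisymmetry constraint and no special treatment of $a_1=a_2$. With this version, $X=|B_w|$ and $Y=|B_l|$ are independent binomials even under overlap, every one of the desired winner's highlighted actions beats all of the loser's highlighted actions independently with probability exactly $2^{-Y}$, and the bound $\left(1-2^{-Y}\right)^{X}$ holds verbatim (also when either set is empty). If you replace your ``actions plus tournament'' randomization with this per-agent-pair randomization, the rest of your argument (the stratification into cells, the conditional bound $e^{-X2^{-Y}}\le\theta_{i,j}$, and your Chernoff estimates yielding $\zeta_{i-1}\xi_{j-1}$, $\zeta_k$, $\xi_l$) goes through unchanged, and the FKG step becomes unnecessary since independence gives the product directly.
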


For example, in our running economic example, we have $ |C_w(s)| = \begin{pmatrix} 32 \\ 4 \end{pmatrix} $ and $ |C_l(s)| = \begin{pmatrix} 8 \\ 4 \end{pmatrix} $ for every anonymized scenario $ s \in S $. We can thus choose $ r = \frac{\begin{pmatrix} 32 \\ 4 \end{pmatrix}}{\begin{pmatrix} 8 \\ 4 \end{pmatrix}} = \frac{3596}{7} $ and $ \varepsilon = 0 $ in the theorem statement above. If we consider $ \alpha = 0.315392 $ then we can find valid $ k,l,(\mu_i)_{i=1}^k,(\nu_j)_{j=1}^{l} $ such that $ \sum_{i = 1}^{k} \sum_{j = 1}^{l} \theta_{i,j} \zeta_{i-1} \xi_{j-1} + \zeta_k + \xi_l \leq 0.000035 $, meaning that our politician can choose the economist who was supported by more historical precedent with at least $ 99.99\% $ accuracy. For details, see appendix C. According to our following proof construction, the politician will sample a policy by independently highlighting each of the $ 2 \cdot \begin{pmatrix} 200 \\ 4 \end{pmatrix} = 129369900 $ player-action pairs with probability $ \frac{\alpha}{\begin{pmatrix} 8 \\ 4 \end{pmatrix}} = 0.0045056 $.

\begin{proof}

\begin{table}
\centering
\def\arraystretch{1.25}
\begin{tabular}{c|c|c|c|c|c|c|c|}
& a & b & c & d & e & f & $ \delta $ \\ \hline
a & \cellcolor{gray!50}$0$ & $ \frac{1}{2} $ & \cellcolor{gray!50}$0$ & \cellcolor{gray!50}$0$ & $\frac{1}{2}$ & $\frac{1}{2}$ & $\frac{1}{2}$ \\ \hline
b & \cellcolor{gray!100}$1$ & \cellcolor{gray!50}$1$ & \cellcolor{gray!100}$0$ & \cellcolor{gray!100}$1$ & \cellcolor{gray!50}$1$ & \cellcolor{gray!50}$1$ & \cellcolor{gray!50}$1$ \\ \hline
c & \cellcolor{gray!50}$0$ & $\frac{1}{2}$ & \cellcolor{gray!50}$0$ & \cellcolor{gray!50}$0$ & $\frac{1}{2}$ & $\frac{1}{2}$ & $\frac{1}{2}$ \\ \hline
d & \cellcolor{gray!100}$0$ & \cellcolor{gray!50}$1$ & \cellcolor{gray!100}$1$ & \cellcolor{gray!100}$0$ & \cellcolor{gray!50}$1$ & \cellcolor{gray!50}$1$ & \cellcolor{gray!50}$1$ \\ \hline
e & \cellcolor{gray!50}$0$ & $\frac{1}{2}$ & \cellcolor{gray!50}$0$ & \cellcolor{gray!50}$0$ & $\frac{1}{2}$ & $\frac{1}{2}$ & $\frac{1}{2}$ \\ \hline
f & \cellcolor{gray!50}$0$ & $\frac{1}{2}$ & \cellcolor{gray!50}$0$ & \cellcolor{gray!50}$0$ & $\frac{1}{2}$ & $\frac{1}{2}$ & $\frac{1}{2}$ \\ \hline
$ \delta $ & \cellcolor{gray!50}$0$ & $\frac{1}{2}$ & \cellcolor{gray!50}$0$ & \cellcolor{gray!50}$0$ & $\frac{1}{2}$ & $\frac{1}{2}$ & $\frac{1}{2}$ \\ \hline
\end{tabular}
\caption{An illustration of the payoff matrix for agent 1 (represented as the row player) under the kind of policy generated by the procedure from the proof of Theorem \ref{ckddg_upper_bound}. Here $ A = \{a, b, c, d, e, f\}, H = \{(1,b),(1,d),(2,a),(2,c),(2,d)\}, m_{b,a}=1,m_{b,c}=2,m_{b,d}=1,m_{d,a}=2,m_{d,c}=1,m_{d,d}=2 $. The highlighted actions $ H $ are depicted in gray. The available actions in a given scenario are not depicted.}
\end{table}

Let's create a procedure for randomly generating policies such that the average error of the policies generated by this procedure is at most our bound. Let's ``highlight'' some subset $ H $ of $ \{1, 2\} \times A $ where each element is chosen independently with probability $ \beta = \frac{\alpha}{n} $. For every $ a_1 \in A $ with $ (1, a_1) \in H $ and for every $ a_2 \in A $ with $ (2, a_2) \in H $, choose $ m_{a_1,a_2} \in \{1, 2\} $ uniformly at random. 

We can now define our policy $ M $: if the agents play actions $ a_1, a_2 $ with $ (1,a_1),(2,a_2) \in H $ then agent $ m_{a_1,a_2} $ wins. If only one agent plays a highlighted action then that agent wins, and if neither agent plays a highlighted action then both win with probability $ \frac{1}{2} $. Note that if both agents can access a highlighted action, the outcome will be determined by the sub-grid of highlighted actions.

To compute the average error of $ M $, we would naively generate a policy $ M $ as described, sample an anonymous scenario from $ S $, flip a coin to decide the order of the agents, and then compute the probability of the principal rewarding the wrong agent. Equivalently, though, we can sample an anonymous scenario and decide the ordering of the agents first, then highlight our actions, then decide which agent to favor if both play highlighted actions, then compute the probability of the principal rewarding the wrong agent. Let's consider those last two steps together: suppose we fix the set of highlighted actions that the desired winner can access (say, $ B_w \neq \emptyset $) and we do the same for the desired loser (denoted $ B_l \neq \emptyset $). Let's upper bound the probability that the desired loser wins by upper bounding the probability that the desired winner lacks a single highlighted action which guarantees a win against all of the desired loser's highlighted actions. The probability that any one of the desired winner's highlighted actions beats all of the desired loser's highlighted actions is $ 2^{-|B_l|} $, so the probability that every one of the desired winner's highlighted actions fails to guarantee a win is $ \left ( 1 - 2^{-|B_l|} \right )^{|B_w|} $. Note by inspection that this formula continues to hold if any of $ B_w, B_l $ are empty.

Let's now take another step back and consider the distribution of $ |B_w|, |B_l| $ for a fixed anonymous scenario $ s \in S $ and a fixed desired winner $ i \in \{1, 2\} $. By our procedure for choosing $ H $, $ |B_w| $ follows a binomial distribution with $ |C_w(s)| $ samples and a success probability of $ \beta $ while $ |B_l| $ follows a binomial distribution with $ |C_l(s)| $ samples and a success probability of $ \beta $. Furthermore, $ |B_w| $ and $ |B_l| $ are independent. Thus, conditional on being in anonymous scenario $ s $ (note that $ i $ doesn't matter), the probability of the desired loser winning is upper bounded by $ \underset{\substack{X \sim B(|C_w(s)|,\beta) \\ Y \sim B(|C_l(s)|,\beta)}}{\mathbb{E}} \left [ \left ( 1 - 2^{-Y} \right )^X \right ] $. Finally we can take the expectation over anonymous scenarios to see that some policy has error at most $ \underset{s \sim P}{\mathbb{E}} \left [ \underset{\substack{X \sim B(|C_w(s)|,\beta) \\ Y \sim B(|C_l(s)|,\beta)}}{\mathbb{E}} \left [ \left ( 1 - 2^{-Y} \right )^X \right ] \right ] $.

We know that $ |C_w(s)| < m $ or $ |C_l(s)| > n $ with probability at most $ \varepsilon $; thus, if we define $ S' = \{s \in S: |C_w(s)| \geq m, |C_l(s)| \leq n \} $ then our bound becomes $ \underset{s \in S'}{\sum} \left [ P(s) \underset{X,Y}{\mathbb{E}} \left [ \left ( 1 - 2^{-Y} \right )^X \right ] \right ] + \varepsilon $. Let's now focus on the inner expectation and for every $ i \in \{1, ..., k\}, j \in \{1, ..., l\} $ let's define the event $ E_{i,j} $ to say $ X \in [\mu_i \beta m, \mu_{i-1} \beta m) $ (or $ [\mu_1 \beta m, \infty) $ if $ i = 1 $) and to say $ Y \in (\nu_{j-1} \log_2(r), \nu_j \log_2(r)] $ (or $ (-\infty, \nu_1 \log_2(r)] $ if $ j = 1 $). Conditional on witnessing event $ E_{i,j} $ we have:
\begin{align*}
(1 - 2^{-Y})^X \leq e^{-X 2^{-Y}} \leq e^{-\mu_i \beta m 2^{-\nu_j \log_2(r)}} = \theta_{i,j}
\end{align*}
Thus our inner expectation is bounded by:
\begin{align*}
\sum_{i = 1}^{k} \sum_{j = 1}^{l} \theta_{i,j} \underset{X,Y}{\mathbb{P}} \left [ E_{i,j} \right ] + \underset{X,Y}{\mathbb{P}} \left [ \bigcap_{i,j} E_{i,j}^{c} \right ]
\end{align*}
A standard tail bounding argument shows that $ \mathbb{P}_{X,Y} \left [ E_{i,j} \right ] \leq \zeta_{i-1} \xi_{j-1} $ and that $ \mathbb{P}_{X,Y} \left [ \bigcap_{i,j} E_{i,j}^{c} \right ] \leq \zeta_k + \xi_l $ for all $ s \in S' $. For details, see appendix D. We can now pull the upper bound on our inner expectation outside the sum over $ s $. Since $ \sum_{s \in S'} P(s) \leq 1 $, we get our desired result. 
\end{proof}

\begin{corollary}
Suppose that $ (A, S, P, C_w, C_l) $ is a CKDDG and suppose that $ m, n \in \mathbb{N}^+ $ are such that $ |C_l(s)| \leq n, |C_w(s)| \geq m $ for all $ s \in S $. Then there is a policy whose error is bounded by a function of $ \frac{m}{n} $ which decays faster than the reciprocal of any polynomial.
\end{corollary}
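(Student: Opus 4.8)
The plan is to specialize Theorem \ref{ckddg_upper_bound} with the trivial sampling-failure probability and a single fixed choice of the remaining free parameters, and then to read off the asymptotics of the resulting bound; essentially all of the real work is already contained in that theorem.

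First I would observe that, because $|C_w(s)| \ge m$ and $|C_l(s)| \le n$ hold for \emph{every} $s \in S$, the hypothesis of Theorem \ref{ckddg_upper_bound} is met with $\varepsilon = 0$. Writing $r = m/n$, I would dispose of small $r$ trivially: every policy has error at most $1$, so I can simply let the bounding function equal $1$ whenever $r < 4$, and it then suffices to treat $r \ge 4$ (in particular $m > n$). For such $r$ I would apply the theorem with $\alpha = 1$, $k = l = 1$, $\mu_1 = e^{-2}$, and $\nu_1 = \tfrac12$: the length-one sequences are vacuously monotone, $\mu_1 \in (0,1]$, and $\nu_1 = \tfrac12 \ge 1/\log_2 r = \alpha/\log_2 r$ precisely because $r \ge 4$, so every hypothesis holds. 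With these choices the bound $\sum_{i,j} \theta_{i,j} \zeta_{i-1} \xi_{j-1} + \zeta_k + \xi_l + \varepsilon$ collapses to a single three-term expression
\[
g(r) = \theta_{1,1} + \zeta_1 + \xi_1 = e^{-r^{1/2}/e^{2}} + e^{-(1-3e^{-2})r} + e^{-1}\,r^{-\frac12\log_2\!\left(\frac{\log_2 r}{2e}\right)} .
\]

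The final step is to check that $g$ decays faster than the reciprocal of any polynomial. The first summand is $e^{-\Theta(\sqrt r)}$ and, since $1 - 3e^{-2} > 0$, the second is $e^{-\Theta(r)}$, so both are $o(r^{-c})$ for every $c > 0$. For the third summand the exponent of $r$ equals $-\tfrac12\log_2\!\left(\tfrac{\log_2 r}{2e}\right)$, which tends to $-\infty$, so for each fixed $c$ it eventually falls below $-c$ and hence $\xi_1 < e^{-1} r^{-c}$; this $\xi$-term is the slowest-decaying of the three, yet still superpolynomially small. Setting $f(r) = g(r)$ for $r \ge 4$ and $f(r) = 1$ for $r < 4$ then yields a function of $r = m/n$ with the required decay that upper-bounds the minimum error of every CKDDG satisfying the hypotheses. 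I do not expect a genuine obstacle here: everything is bookkeeping layered on Theorem \ref{ckddg_upper_bound}. The only point needing a little care is that the constant choice $\nu_1 = \tfrac12$ is admissible only once $r$ is bounded away from $1$, which is what forces the piecewise definition of $f$.
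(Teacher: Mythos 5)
Your proposal is correct and follows essentially the same route as the paper: both apply Theorem \ref{ckddg_upper_bound} with $\varepsilon=0$, $\alpha=1$, $k=l=1$, $\nu_1=\tfrac12$ (requiring $r\geq 4$), differing only in the immaterial choice of $\mu_1$ ($e^{-2}$ versus the paper's $\tfrac12$), and both conclude by noting that the $\xi_1$ term, whose exponent $-\tfrac12\log_2\bigl(\tfrac{\log_2 r}{2e}\bigr)$ tends to $-\infty$, is the slowest yet still superpolynomially decaying summand. Your explicit piecewise handling of $r<4$ is a minor tidiness the paper leaves implicit, but the substance is identical.
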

\begin{proof}
Let's define $ r = \frac{m}{n} $ as in Theorem \ref{ckddg_upper_bound}. If we apply Theorem \ref{ckddg_upper_bound} with $ \alpha = 1, k = l = 1, \mu_1 = \nu_1 = \frac{1}{2} $ (note that we need $ r \geq 4 $ to ensure $ \nu_1 \geq \frac{\alpha}{\log_2(r)} $) and simplify then we get an upper bound of $ e^{-\frac{\sqrt{r}}{2}} + \left ( \frac{e}{2} \right )^{-\frac{r}{2}} + \frac{1}{e} r^{-\log_4 \left ( \frac{\log_4(r)}{e} \right )} $ which decays faster than the reciprocal of any polynomial.
\end{proof}

We now turn our attention to PIDDG's. Here we can again find good policies for broad classes of problems, though the error does not decay as dramatically as with CKDDG's. To understand the construction, suppose there is a scenario in a PIDDG where the desired loser has access to $ 100 $ actions and the desired winner has access to $ 10,000 $ actions. The principal can randomly rank all agent-action pairs and declare that whichever agent plays an action with a higher agent-action ranking wins. Whichever agent has an available action with the highest agent-action ranking among all available actions can guarantee themselves a win. The probability that the desired loser is this agent is $ \frac{100}{100 + 10,000} = \frac{1}{101} $. Note that this reasoning only requires the desired winner to know the ranking from the policy, not the desired loser's available actions. Let's now state and prove the general version:

\begin{theorem}
\label{piddg_upper_bound}
Any PIDDG $ (A, S, P, C_w, C_l) $ has a policy with error at most $ \underset{s \sim P}{\mathbb{E}} \left [ \begin{cases} \frac{|C_l(s)|}{|C_w(s)| + |C_l(s)|} & C_w(s) \cup C_l(s) \neq \emptyset \\ \frac{1}{2} & C_w(s) \cup C_l(s) = \emptyset \end{cases} \right ] $.
\end{theorem}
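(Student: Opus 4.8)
The plan is to use the probabilistic method: I will build a random policy from a uniformly random total ranking of the agent--action pairs and show its expected error meets the stated bound. Concretely, sample a total ranking $\pi$ uniformly at random on the set $\{1,2\} \times A$ of agent--action pairs, and let $M_\pi$ be the policy that, when the agents play $a_1, a_2 \in A \cup \{\delta\}$, declares agent $i$ the winner if $a_i \in A$ and either $a_{-i} = \delta$ or $(i,a_i)$ outranks $(-i,a_{-i})$ in $\pi$, and declares each agent the winner with probability $\tfrac12$ if $a_1 = a_2 = \delta$. This is well defined since $(1,a_1)$ and $(2,a_2)$ are always distinct pairs, so $M_\pi$ is a genuine policy for every $\pi$ and $\{M_\pi\}_\pi$ is a distribution over policies; in effect $\delta$ is ranked below everything in $A$, so whichever agent holds the highest-$\pi$-ranked available action in $A$ can force a win.

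Next, fix $\pi$ and analyze the Bayesian game $G_1(B,M_\pi)$ in which agent $1$ is the desired winner. I would have agent $1$ use the pure strategy of playing the highest-$\pi$-ranked action in its type (or $\delta$ if its type is empty), and argue scenario by scenario that this guarantees agent $1$ a payoff of at least $g_\pi(s)$, where $g_\pi(s) = 1$ when the highest-$\pi$-ranked element of $(\{1\}\times C_w(s)) \cup (\{2\}\times C_l(s))$ lies in $\{1\}\times A$ (in particular whenever $C_l(s) = \emptyset \neq C_w(s)$), $g_\pi(s) = 0$ when $C_w(s) = \emptyset \neq C_l(s)$, and $g_\pi(s) = \tfrac12$ when both sets are empty. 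The point is that this comparison is independent of what agent $2$ knows or does: if agent $1$'s chosen action is available and $\pi$-outranks all of $C_l(s)$, it beats every available action of agent $2$ as well as $\delta$, and if agent $2$ plays an unavailable action the payoff conventions of $G_1$ already award agent $1$ the win. Since $G_1(B,M_\pi)$ is constant-sum with total payoff $1$, this gives $v_2(G_1(B,M_\pi)) \le 1 - \sum_s P(s) g_\pi(s)$, and the mirror-image argument gives the analogous bound for $v_1(G_2(B,M_\pi))$.

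Finally, take the expectation over $\pi$. For fixed $s$ with $C_w(s) \cup C_l(s) \neq \emptyset$, the highest-$\pi$-ranked element of the $|C_w(s)| + |C_l(s)|$ relevant pairs is uniformly distributed among them, so $\mathbb{E}_\pi[g_\pi(s)] = |C_w(s)| / (|C_w(s)| + |C_l(s)|)$, while $\mathbb{E}_\pi[g_\pi(s)] = \tfrac12$ when both sets are empty; by symmetry the same values arise on the $G_2$ side. Substituting into $e^{\text{PIDDG}}_B(M_\pi) = \tfrac12\big(v_1(G_2(B,M_\pi)) + v_2(G_1(B,M_\pi))\big)$ and using linearity of expectation yields $\mathbb{E}_\pi\big[e^{\text{PIDDG}}_B(M_\pi)\big] \le \mathbb{E}_{s\sim P}\big[\,\cdots\,\big]$, which is exactly the claimed bound, so some fixed ranking $\pi$ attains error at most this value. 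The main obstacle is getting the scenario-by-scenario domination argument of the middle paragraph exactly right: this means carefully handling plays of $\delta$ and of unavailable actions within the payoff conventions of $G_1$ and $G_2$, accounting for the fact that agent $1$ does not know agent $2$'s type, and observing that one genuinely needs the probabilistic method here rather than the averaged policy $\mathbb{E}_\pi[M_\pi]$, since the value of a zero-sum game is only convex rather than linear in the payoffs.
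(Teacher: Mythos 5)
Your proposal is correct and follows essentially the same route as the paper's proof: a uniformly random ranking of the agent--action pairs with $\delta$ effectively ranked last, the observation that playing one's highest-ranked available action is dominant regardless of the opponent's type, and an exchange of expectations so that the per-scenario loser-win probability becomes $\frac{|C_l(s)|}{|C_w(s)|+|C_l(s)|}$ (or $\frac12$ when both action sets are empty), followed by the probabilistic method to extract a single good policy. Your added care about the Bayesian payoff conventions and about averaging errors rather than policies matches what the paper states more briefly.
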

\begin{proof}
As before, let's consider a random family of policies and argue that the average error of our policies is at most our desired bound. Let's uniformly randomly select an injection $ \sigma : \{1, 2 \} \times A \to \{1, ..., 2|A|\} $ (in other words, rank the agent-action pairs). Define $ \sigma' : \{1, 2\} \times (A \cup \{\delta\}) \to \{0, ..., 2|A|\} $ via $ \sigma'(i,a) = \begin{cases} \sigma(i,a) & a \in A \\ 0 & a = \delta \end{cases} $. Let's define the policy $ M_{\sigma'} $ so that if agent $ 1 $ plays $ a_1 $ and if agent $ 2 $ plays $ a_2 $ then $ M_{\sigma'} $ rules in favor of whichever agent has a higher value of $ \sigma'(i,a_i) $ (and rewards both with probability $ \frac{1}{2} $ if $ \sigma'(1,a_1) = \sigma'(2,a_2) $, i.e. if $ a_1 = a_2 = \delta $). Note that the optimal behavior for an agent does not depend on the other agent's strategy or available actions.

To evaluate the average error of such a policy, let's do the same trick of exchanging the order of our expectations: first select an anonymous scenario $ s \in S $ and an ordering of the agents, then determine the probability that a random policy will reward the wrong agent. If neither agent has any non-trivial actions then the desired loser has a $ \frac{1}{2} $ chance of winning, otherwise there will be $ |C_w(s)| + |C_l(s)| $ valid agent-action pairs of which $ |C_l(s)| $ belong to the desired loser so there is a $ \frac{|C_l(s)|}{|C_w(s)| + |C_l(s)|} $ chance that the desired loser ends up with the highest-ranked agent-action pair. Thus the expected error of a policy in this family is exactly $ \underset{s \sim P}{\mathbb{E}} \left [ \begin{cases} \frac{|C_l(s)|}{|C_w(s)| + |C_l(s)|} & C_w(s) \cup C_l(s) \neq \emptyset \\ \frac{1}{2} & C_w(s) \cup C_l(s) = \emptyset \end{cases} \right ] $, showing that some policy has an error this low. 
\end{proof}

\begin{remark}
The bound from Theorem \ref{piddg_upper_bound} also holds for CKDDG's and the proof is identical. However, Theorem \ref{piddg_upper_bound} gives us an asymptotically weaker result than Theorem \ref{ckddg_upper_bound}. 
\end{remark}

We might hope that we could improve Theorem \ref{piddg_upper_bound} with a cleverer technique to approach the faster-than-reciprocal-of-any-polynomial decay of Theorem \ref{ckddg_upper_bound}. Unfortunately this is not possible in full generality: we can find a PIDDG such that the minimum error decays proportionally to the inverse of the ratio of the number of available actions for the desired winner to the number of available actions for the desired loser. 

\begin{theorem}
\label{piddg_lower_bound}
Suppose $ m, n \in \mathbb{N}^{+} $ with $ m \geq n $. Then there exists a PIDDG $ B = (A, S, P, C_w, C_l) $ such that $ |C_w(s)| \geq m, |C_l(s)| \leq n $ for all $ s \in S $ but $ e_B^{\text{PIDDG}}(M) \geq \frac{n}{2m} $ for all policies $ M $.
\end{theorem}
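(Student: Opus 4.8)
The plan is to build a PIDDG in which the desired winner's action set is completely uninformative and the desired loser holds a small random subset of it, and then to show every policy is forced to misidentify the winner with probability at least $\frac{n}{2m}$ because the loser can always \emph{imitate} whatever the winner would do. Concretely, take $A=\{1,\dots,m\}$, let $S=\binom{A}{n}$ be the family of $n$-element subsets of $A$ with $P$ uniform, and set $C_w(s)=A$ and $C_l(s)=s$ for every $s$; then $|C_w(s)|=m\ge m$ and $|C_l(s)|=n\le n$, so this is a legal PIDDG. Fix an arbitrary policy $M$ and write $N[x][y]=M(2,x,y)$ for $x,y\in A\cup\{\delta\}$. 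In $G_1(B,M)$ agent $1$ (the winner) always has type $A$, so it has no private information and plays a single fixed distribution; in $G_2(B,M)$ agent $2$ (the winner) is in the same position. Since $e^{\text{PIDDG}}_B(M)=\tfrac12\big(v_1(G_2(B,M))+v_2(G_1(B,M))\big)$ by definition, it suffices to lower-bound these two game values and add.

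\textbf{The key device.} Let $\sigma_2^{\star}\in\Delta(A\cup\{\delta\})$ be a value-optimal strategy for the winner (agent $2$) in $G_2(B,M)$, and set $w^{\star}=N\sigma_2^{\star}$, i.e. $w^{\star}[x]=\sum_y M(2,x,y)\,\sigma_2^{\star}(y)\in[0,1]$, with $\mu=\min_{x\in A\cup\{\delta\}}w^{\star}[x]$. In $G_2(B,M)$, the loser (agent $1$, type $L$) best-responding to $\sigma_2^{\star}$ plays some $x\in L\cup\{\delta\}$ maximizing $\sum_y M(1,x,y)\sigma_2^{\star}(y)=1-w^{\star}[x]$, so $v_1(G_2(B,M))=1-\mathbb{E}_L[\min_{x\in L\cup\{\delta\}}w^{\star}[x]]$ (using that $\sigma_2^{\star}$ is the winner's optimal choice). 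A short case analysis on where the minimizer of $w^{\star}$ lies — if it lies in $A$, a uniformly random $n$-subset contains it with probability $\binom{m-1}{n-1}/\binom{m}{n}=n/m$; if it is $\delta$, then $\delta\in L\cup\{\delta\}$ always — yields $\mathbb{E}_L[\min_{x\in L\cup\{\delta\}}w^{\star}[x]]\le 1-\tfrac{n}{m}(1-\mu)$, hence $v_1(G_2(B,M))\ge\tfrac{n}{m}(1-\mu)$.

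\textbf{The imitation step.} In $G_1(B,M)$ the loser (agent $2$, type $L$) deviates to: draw $y\sim\sigma_2^{\star}$, play $y$ if $y\in L\cup\{\delta\}$, and otherwise play $\delta$. Against any strategy $\sigma_1$ of the winner, dropping the (nonnegative) contribution of the rounds where $y\notin L\cup\{\delta\}$ and using that $y$ is available with probability $n/m$ when $y\in A$ and with probability $1$ when $y=\delta$, the loser's payoff is at least $\tfrac{n}{m}\,\mathbb{E}_{x\sim\sigma_1,\,y\sim\sigma_2^{\star}}[M(2,x,y)]=\tfrac{n}{m}\langle\sigma_1,w^{\star}\rangle\ge\tfrac{n}{m}\mu$. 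Since this holds for every $\sigma_1$, we get $v_2(G_1(B,M))\ge\tfrac{n}{m}\mu$. Adding the two bounds, $e^{\text{PIDDG}}_B(M)\ge\tfrac12\big(\tfrac{n}{m}(1-\mu)+\tfrac{n}{m}\mu\big)=\tfrac{n}{2m}$, for every $M$. (Together with Theorem~\ref{piddg_upper_bound} applied to this $B$, which gives error at most $\tfrac{n}{m+n}$, this pins the optimal error of $B$ to $\Theta(n/m)$.)

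\textbf{Where the difficulty lies.} The naïve move — letting the loser in $G_1(B,M)$ play a uniformly random element of $L$ — only secures $\Theta(1/m)$, because the winner can concentrate on a single near-dominant action; recovering the correct factor $n/m$ is the whole point, and it is obtained by having the loser imitate the \emph{winner's} optimal play $\sigma_2^{\star}$ from the mirror game $G_2(B,M)$, so that one scalar $\mu=\min_x w^{\star}[x]$ simultaneously controls $v_2(G_1(B,M))$ from below (via $\tfrac nm\mu$) and $v_1(G_2(B,M))$ from below (via $\tfrac nm(1-\mu)$), with the $\mu$-dependence cancelling in the sum. The remaining points are routine: existence of values and optimal strategies for these finite zero-sum Bayesian games (legitimizing $\sigma_2^{\star}$ and the minimax identities), the elementary identity $\Pr_L[y\in L]=n/m$, and careful bookkeeping of the always-available default action $\delta$ both inside the $\min/\max$ over $L\cup\{\delta\}$ and in the loser's imitation strategy.
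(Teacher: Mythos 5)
Your proof is correct, and it reaches the bound by a genuinely different (though related) route from the paper's. The paper's argument introduces a \emph{hypothetical} de-anonymized scenario in which both agents have every action available, takes a Nash equilibrium pair $(q_1^*,q_2^*)$ of the resulting zero-sum matrix game with values $v_1+v_2=1$, and has the desired loser in \emph{each} of $G_1,G_2$ imitate its own equilibrium strategy (overflowing unavailable mass onto $\delta$), so each loser secures at least $\tfrac{n}{m}v_i$ and the $v_i$'s cancel in the sum. You instead anchor both bounds on a single object from one of the \emph{actual} Bayesian games: an optimal strategy $\sigma_2^{\star}$ of the winner in $G_2$, with $\mu=\min_x w^{\star}[x]$. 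In $G_2$ you use the optimality of $\sigma_2^{\star}$ (so that the loser's type-wise best response attains the value, which indeed requires optimality — the inequality would go the wrong way for an arbitrary $\sigma_2$) together with the probability-$\tfrac{n}{m}$ availability of the global minimizer, giving $v_1(G_2)\ge\tfrac{n}{m}(1-\mu)$; in $G_1$ the loser imitates $\sigma_2^{\star}$ with overflow to $\delta$, giving $v_2(G_1)\ge\tfrac{n}{m}\mu$, and $\mu$ cancels. Both proofs share the essential imitation-with-$\delta$-overflow device and the $\tfrac{n}{m}$ availability computation; the difference is the coupling. The paper's symmetric hypothetical-scenario technique is the one it explicitly flags as generalizable (and reuses in Appendix E for the CKDDG lower bound), whereas your asymmetric version is somewhat more self-contained — it needs no auxiliary game, only the minimax theorem for the finite zero-sum Bayesian game $G_2$ — at the cost of treating the two games differently. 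Your bookkeeping of the default action (both inside the minimum over $L\cup\{\delta\}$ and in the imitation strategy) and the independence of $L$, the winner's mixture, and the imitated draw are all handled correctly.
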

\begin{proof}
Let $ A = \{1, ..., m\} $ and $ S = \begin{pmatrix} A \\ n \end{pmatrix} $. Let's fix $ P $ to be uniform over $ S $ and let's define $ C_w(s) = A, C_l(s) = s $ for all $ s \in S $. Suppose $ M $ is a policy for this PIDDG.

$ e^{\text{PIDDG}}_B(M) $ is the average of two quantities: the probability that agent $ 1 $ wins while being the desired loser and the probability that agent $ 2 $ wins while being the desired loser. For the first case, suppose $ q_1^l : S \to (A \cup \{\delta\}) \to [0, 1] $ denotes the mixed strategy of agent $ 1 $ in a given scenario and suppose $ q_2^w : (A \cup \{\delta\}) \to [0, 1] $ denotes the mixed strategy of agent $ 2 $ -- note that agent $ 2 $ always gets every action and thus cannot infer anything about which scenario is unfolding. Suppose agent $ 1 $'s strategy is only supported on actions that they are permitted to use. Then the average win rate for agent $ 1 $ is $ \underset{s \sim P}{\mathbb{E}} \left [ \left ( q_1^l(s) \right )^{\intercal} M(1,\cdot,\cdot) q_2^w \right ] = \left ( \underset{s \sim P}{\mathbb{E}} \left [ q_1^l(s) \right ] \right )^{\intercal} M(1,\cdot,\cdot) q_2^w $. Likewise, if agent $ 2 $ is the desired loser then they will win with probability $ \left ( q_1^w \right )^{\intercal} M(2,\cdot,\cdot) \underset{s \sim P}{\mathbb{E}} \left [ q_2^l(s) \right ] $ where $ q_1^w, q_2^l $ are defined symmetrically to $ q_2^w, q_1^l $.

Now, let's construct some strategies for the desired losers to lower bound the error that they can guarantee. Momentarily pretend that we're in a new scenario where agents $ 1 $ and $ 2 $ have access to every possible action and define $ q_1^*, q_2^* : (A \cup \{\delta\}) \to [0, 1] $ to be resulting Nash equilibrium strategies. Now, for every $ i \in \{1, 2\} $ define $ q_i^l $ so that $ q_i^l(s,a) = q_i^*(a) $ if $ a \in s $, $ q_i^l(s,a) = 0 $ if $ a \in A \setminus s $, and $ q_i^l(s,\delta) = q_i^*(\delta) + \sum_{b \in A \setminus s} q_i^*(b) $. For every action $ a \in A $, exactly $ \frac{n}{m} $ of the scenarios include $ a $ so $ \underset{s \sim P}{\mathbb{E}} \left [ q_i^l(s,a) \right ] = \frac{n}{m} q_i^*(a) $. We clearly have $ \underset{s \sim P}{\mathbb{E}} \left [ q_i^l(s,\delta) \right ] \geq q_i^*(\delta) \geq \frac{n}{m} q_i^*(\delta) $ so:
\begin{align*}
\left ( \underset{s \sim P}{\mathbb{E}} \left [ q_1^l(s) \right ] \right )^{\intercal} M(1,\cdot,\cdot) q_2^w & \geq \left ( \frac{n}{m} q_1^* \right )^{\intercal} M(1,\cdot,\cdot) q_2^w \\
\left ( q_1^w \right )^{\intercal} M(2,\cdot,\cdot) \underset{s \sim P}{\mathbb{E}} \left [ q_2^l(s) \right ] & \geq \left ( q_1^w \right )^{\intercal} M(2,\cdot,\cdot) \frac{n}{m} q_2^*
\end{align*}
Now, since $ q_1^*, q_2^* $ are Nash equilibria we have:
\begin{align*}
\left ( q_1^* \right )^{\intercal} M(1,\cdot,\cdot) q_2^w & \geq \left ( q_1^* \right )^{\intercal} M(1,\cdot,\cdot) q_2^* \\
\left ( q_1^w \right )^{\intercal} M(2,\cdot,\cdot) q_2^* & \geq \left ( q_1^* \right )^{\intercal} M(2,\cdot,\cdot) q_2^*
\end{align*}
Thus the error of $ M $ is at least:
\begin{align*}
\frac{\frac{n}{m} \left ( q_1^* \right )^{\intercal} M(1,\cdot,\cdot) q_2^* + \frac{n}{m} \left ( q_1^* \right )^{\intercal} M(2,\cdot,\cdot) q_2^*}{2} = & \frac{n}{2m}
\end{align*}
\end{proof}

Although the details of this example are inconsequential, the proof technique generalizes: we may take a debate game, consider a new scenario which isn't present in the given debate game, examine the strategies that the agents would adopt if they were faced with this hypothetical scenario, and then understand the behavior of the agents in the real scenarios in terms of their hypothetical behavior in this hypothetical scenario.

\begin{remark}
We might ask whether we can attain a lower bound similar to Theorem \ref{piddg_lower_bound} in the CKDDG case. Indeed we can: see appendix E.
\end{remark}

\section{Future research}

Our results could be generalized in a number of directions. For example, many of our results apply to distinguishing debate games, but there are likely analogues in settings where the two agents do not have the same set of actions. One could also consider debates between three or more agents. Additionally, one could consider trying to optimize debate in a more iterated setting where the principal wishes to minimize the cost of running the debate protocol by terminating it as early as possible. Another natural question is whether there are any special cases of debate games where the principal can compute an optimal policy in polynomial time.

One particularly interesting direction involves looking at situations in which the principal cannot enforce whatever policy they wish. For example, suppose agents $ 1 $ and $ 2 $ are playing an extensive-form perfect information game where agent $ 1 $ either moves left or right and then agent $ 2 $ either moves left or right. 
For any fixed payoffs at the leaves of this extensive-form game, we can convert this extensive-form game into a normal-form game where agent $ 1 $ has two actions and agent $ 2 $ has four actions. In this case, the map from choices of utilities at the leaves of our extensive-form game to payoff matrices in the normal-form game is not surjective. For example, the principal cannot reward agent $ 1 $ if agent $ 1 $ goes left and agent $ 2 $ always goes left but also reward agent $ 2 $ if agent $ 1 $ goes left and agent $ 2 $ does whatever agent $ 1 $ did: agent $ 2 $ changed their action in the normal-form game (i.e. their pure strategy in the extensive-form game) but only in a way that differs on a branch that the principal never sees. Given that in many real-world games the principal cannot enforce whatever policy they wish, the question remains as to whether error bounds can be proven under such constraints.

Another particularly interesting direction involves examining debate games which are too big to specify. Are there natural ways to concisely represent debate games, especially in settings where the number of actions and/or scenarios is exponential in some other parameter? In this case, even our algorithms for {\em evaluating} a policy will require exponential time. What are good algorithms for reasoning about these representations? Let's draw particular attention to the plight of the agents when handed a compressed representation of a policy. When can the principal expect the agents to respond near-optimally? All of these questions merit further examination.

%\bibliographystyle{unsrt}  
%\bibliography{references}

\newpage
\appendix

\section{Proof Completions of Initial Observations}

\subsection{Proposition 3.5 Proof Completion}

To show more explicitly that $ B $ has a policy with zero error, consider the policy $ M : \{1, 2\} \times \{0, 1, 2, \delta\}^2 \to [0, 1] $ defined by:
\begin{align*}
M(i,a_1,a_2) & = \begin{cases} 1 & a_1, a_2 \in \{0, 1, 2\}, a_i = (a_{-i} + 1) \mod 3 \\ 0 & a_1, a_2 \in \{0, 1, 2\}, a_{i} = (a_{-i} - 1) \mod 3 \\ \frac{1}{2} & a_1, a_2 \in \{0, 1, 2\}, a_1 = a_2 \\ 1 & a_i \in \{0, 1, 2\}, a_{-i} = \delta \\ 0 & a_i = \delta, a_{-i} \in \{0, 1, 2\} \\ \frac{1}{2} & a_1 = a_2 = \delta \end{cases}
\end{align*}
Lastly, let's note that not only can we say that every policy for $ B' $ has strictly positive error, we furthermore know that every policy for $ B' $ has error at least $ \frac{1}{6} $. We get this lower bound from Theorem 5.5: $ B' $ is exactly the PIDDG constructed in the proof of Theorem 5.5 when $ m = 3, n = 1 $. 

\subsection{Proposition 3.6 Proof Completion}

To show that our CKDDG has a policy whose error is strictly less than the minimum error among policies which map to $ \{0, 1\} $, consider the policy $ M : \{1, 2\} \times \{0, 1, 2, 3, \delta\}^2 \to [0, 1] $ where $ M(1, \cdot, \cdot) $ is given by Table \ref{good_non_binary_policy}. $ M $ has zero error in anonymous scenarios $ 0, 1, 2 $. In anonymous scenario $ 3 $, the value for the desired loser is $ \frac{1}{2} $ (the desired loser can guarantee at least $ \frac{1}{2} $ by playing action $ 3 $ while the desired winner can guarantee at most $ \frac{1}{2} $ by uniformly playing $ \{0, 1, 2\} $). In anonymous scenario $ 4 $, the value for the desired loser is also $ \frac{1}{2} $ (the desired loser can guarantee at least $ \frac{1}{2} $ by uniformly mixing over $ \{0, 1, 2\} $ and the desired winner can guarantee at most $ \frac{1}{2} $ by playing action $ 3 $). Thus the error of $ M $ is $ \frac{1}{2} \cdot \frac{6}{100} + \frac{1}{2} \cdot \frac{4}{100} = 0.05 $.

\begin{table}
\centering
\def\arraystretch{1.25}

\begin{tabular}{c|c|c|c|c|c|}
& 0 & 1 & 2 & 3 & $ \delta $ \\ \hline
0 & $\frac{1}{2}$ & $ 0 $ & $ 1 $ & $ \frac{1}{2} $ & $1$ \\ \hline
1 & $1$ & $\frac{1}{2}$ & $0$ & $\frac{1}{2}$ & $1$ \\ \hline
2 & $0$ & $1$ & $\frac{1}{2}$ & $\frac{1}{2}$ & $1$ \\ \hline
3 & $\frac{1}{2}$ & $\frac{1}{2}$ & $\frac{1}{2}$ & $\frac{1}{2}$ & $1$ \\ \hline
$ \delta $ & $0$ & $0$ & $0$ & $0$ & $\frac{1}{2}$  \\ \hline
\end{tabular}
\caption{The payoffs for agent $ 1 $ (represented as the row player) under the policy $ M $ referenced in the proof of Proposition 3.6.}
\label{good_non_binary_policy}
\end{table}

Let's now show that every policy mapping to $ \{0, 1\} $ has an error of at least $ 0.056 $. To do this, suppose $ M_{\{0,1\}} : \{1,2\} \times \{0, 1, 2, 3, \delta\}^2 \to \{0, 1\} $ is arbitrary. Let $ \{1,2\} \times \{0, 1, 2, 3, 4\} $ be the de-anonymized scenarios in the CKDG induced by our CKDDG. 

I first claim that if $ M_{\{0,1\}} $ has error less than $ \frac{3}{40} $ then $ M_{\{0,1\}}(1, \cdot, \cdot) $ must take on the values listed in Table \ref{binary_policy_restriction}. To see why, note that if the error is less than $ \frac{3}{40} $ then for every de-anonymized scenario in $ \{1, 2\} \times \{0, 1, 2\} $ the desired winner must have an action to play which wins against both actions the desired loser can play (for anonymized scenarios $ 0, 1, 2 $, both sides get a single non-default action). If this were not the case then for both of the desired winner's actions, at least one action of the desired loser would favor the desired loser. The desired loser could then play each of their actions with $ \frac{1}{2} $ probability to guarantee an error of at least $ \frac{1}{2} \cdot \frac{3}{20} = \frac{3}{40} $.

\begin{table}
\centering
\def\arraystretch{1.25}

\begin{tabular}{c|c|c|c|c|c|}
& 0 & 1 & 2 & 3 & $ \delta $ \\ \hline
0 &  & $ 0 $ & $ 1 $ & & $1$ \\ \hline
1 & $1$ & & $0$ & & $1$ \\ \hline
2 & $0$ & $1$ & & & $1$ \\ \hline
3 & & & & & \\ \hline
$ \delta $ & $0$ & $0$ & $0$ & & \\ \hline
\end{tabular}
\caption{If $ M_{\{0,1\}} $ has error less than $ \frac{3}{40} $ then the payoffs for agent $ 1 $ (represented as the row player) must take on the values listed.}
\label{binary_policy_restriction}
\end{table}

Next, I claim that for every de-anonymized scenario, this action of the desired winner which beats both of the desired loser's actions must be the non-default action. To see why, again suppose towards a contradiction that in de-anonymized scenario $ (w,i) $ (for $ w \in \{1, 2\}, i \in \{0, 1, 2\} $), $ M_{\{0,1\}} $ favors agent $ w $ when agent $ w $ plays $ \delta $ and when agent $ -w $ plays either $ i $ or $ \delta $. Now, in the de-anonymized scenario $ (-w,(i-1) \mod 3) $, agent $ w $ (now the desired loser) can again play $ \delta $ to guarantee a win, thus resulting in an error of at least $ \frac{3}{20} $. Thus for de-anonymized scenarios $ \{1, 2\} \times \{0, 1, 2\} $, $ M_{\{0,1\}} $ must favor the desired winner whenever the desired winner plays their non-default action. Thus $ M_{\{0,1\}} $ must be of the form depicted in Table \ref{binary_policy_restriction} if the error of $ M_{\{0,1\}} $ is less than $ \frac{3}{40} $. 

Now, conditional on $ M_{\{0,1\}} $ having the form described in Table \ref{binary_policy_restriction}, I claim that the error of $ M_{\{0,1\}} $ is at least $ 0.056 $. To see this, let's partition the other de-anonymized scenarios into $ D_1 = \{(1,3),(2,4)\} $ and $ D_2 = \{(2,3),(1,4)\} $. I claim that the contribution to the error from $ D_1 $ is at least $ 0.028 $. To see this, let's case on the values of:
\begin{align*}
x & = M_{\{0,1\}}(1,\delta,3) \\
y & = \max(M_{\{0,1\}}(1,0,3),M_{\{0,1\}}(1,1,3),M_{\{0,1\}}(1,2,3))
\end{align*}
\begin{itemize}
\item If $ x = 0 $ and $ y = 0 $ then agent $ 2 $ can always win de-anonymized scenario $ (1,3) $ by playing action $ 3 $. Thus the error from $ D_1 $ is at least $ \frac{3}{100} $.
\item If $ x = 0 $ and $ y = 1 $ then agent $ 2 $ can win with probability at least $ \frac{1}{3} $ in de-anonymized scenario $ (1,3) $ by uniformly mixing over actions $ \{0, 1, 2\} $. Furthermore, agent $ 1 $ can guarantee a win outright in de-anonymized scenario $ (2,4) $ by playing one of the actions in $ \{0, 1, 2\} $ that caused $ y = 1 $. Thus the error from $ D_1 $ is at least $ \frac{1}{3} \cdot \frac{3}{100} + \frac{2}{100} = \frac{3}{100} $.
\item If $ x = 1 $ and $ y = 0 $ then agent $ 2 $ can win de-anonymized scenario $ (1,3) $ with probability at least $ \frac{3}{5} $ by playing actions $ 0,1,2 $ with probability $ \frac{1}{5} $ each and playing action $ 3 $ with probability $ \frac{2}{5} $. Meanwhile, agent $ 1 $ can win de-anonymized scenario $ (2,4) $ with probability at least $ \frac{1}{2} $ by playing action $ 0 $ with probability $ \frac{1}{2} $ and action $ \delta $ with probability $ \frac{1}{2} $. Thus the error from $ D_1 $ is at least $ \frac{3}{5} \cdot \frac{3}{100} + \frac{1}{2} \cdot \frac{2}{100} = \frac{28}{1000} $.
\item If $ x = 1 $ and $ y = 1 $ then the analysis is identical to the case where $ x = 0, y = 1 $ and the error from $ D_1 $ is at least $ \frac{3}{100} $.
\end{itemize}
We thus see that the error from $ D_1 $ is at least $ \frac{28}{1000} $ in all cases. An entirely symmetric analysis shows that the error from $ D_2 $ is also at least $ \frac{28}{1000} $. Thus, the overall error is at least $ \frac{56}{1000} $.

We have thus shown that if the error of $ M_{\{0,1\}} $ is less than $ \frac{3}{40} = \frac{75}{1000} $ then $ M_{\{0,1\}} $ must resemble Table \ref{binary_policy_restriction} and thus have error at least $ \frac{56}{1000} $, showing that no policy mapping to $ \{0,1\} $ can have error below $ 0.056 $.

\subsection{Proposition 3.7 Proof Completion}

To show that our CKDDG has a policy whose error is strictly less than the minimum error among policies which incentivize pure strategies, again consider the policy $ M $ described in Table \ref{good_non_binary_policy}. As before, scenarios $ 0, 1, 2 $ contribute no error while in scenario $ 3 $ the desired loser can win half of the time. Thus the error of $ M $ is again $ \frac{1}{2} \cdot \frac{1}{10} = 0.05 $. 

Let's show that every policy which incentivizes pure strategies has an error of at least $ 0.1 $. Suppose $ M_{\text{ps}} : \{1, 2\} \times \{0, 1, 2, 3, \delta\}^2 \to [0, 1] $ is a policy such that in every de-anonymized scenario there is a pure-strategy Nash equilibrium for the two agents. For each de-anonymized scenario $ (w,s) $ with $ w \in \{1, 2\}, s \in \{0, 1, 2, 3\} $, pick such a pure-strategy Nash equilibrium and define $ b_i(w,s) \in \{0, 1, 2, 3, \delta\} $ to be the action performed by agent $ i \in \{1, 2\} $ in this Nash equilibrium. 

Now, let's examine $ b_1(1,3) \in \{0, 1, 2, \delta\} $. We can always find some scenario $ s_1 \in \{0, 1, 2\} $ such that $ b_1(1,3) \in C_l(s_1) \cup \{\delta\} $ (if $ b_1(1,3) \neq \delta $ then $ s_1 = b_1(1,3) $ works, if $ b_1(1,3) = \delta $ then any $ s_1 $ will suffice). We have $ C_w(s_1) \cup \{\delta\} \subseteq \{0, 1, 2, 3, \delta\} = C_l(3) \cup \{\delta\} $ so the probability that agent $ 1 $ wins in de-anonymized scenario $ (1,3) $ is at most the probability that agent $ 1 $ wins in de-anonymized scenario $ (2,s_1) $ (because in de-anonymized scenario $ (2,s_1) $ agent $ 1 $ can again play action $ b_1(1,3) $ and agent $ 2 $'s ability to respond is no better). A symmetric argument shows that we can find $ s_2 \in \{0, 1, 2\} $ such that the probability that agent $ 2 $ wins in de-anonymized scenario $ (2,3) $ is at most the probability that agent $ 2 $ wins in de-anonymized scenario $ (1,s_2) $. 

Let $ p_1 \in [0, 1] $ denote the probability that agent $ 1 $ wins in de-anonymized scenario $ (1,3) $ and let $ p_2 \in [0, 1] $ denote the probability that agent $ 2 $ wins in de-anonymized scenario $ (2,3) $. We have thus demonstrated that the error of $ M_{\text{ps}} $ is at least:
\begin{align*}
&\, (1 - p_1) \cdot \frac{1}{20} + p_1 \cdot \frac{3}{20} + (1 - p_2) \cdot \frac{1}{20} + p_2 \cdot \frac{3}{20} \\
= &\, \frac{1}{10} + \frac{1}{10} p_1 + \frac{1}{10} p_2 \\
\geq &\, \frac{1}{10}
\end{align*}
Thus our result holds.

\section{Hardness Proof Details}

\subsection{Lemma 4.5 Proof Completion}

Let's show that if there is a satisfying assignment then our CKDG has a perfect policy, that if our CKDG has a perfect policy then there is a satisfying assignment, that $ C_1 $ is injective, and that $ C_2 $ is injective.

\begin{itemize}
\item To show that if there is a satisfying assignment then our CKDG has a perfect policy, suppose $ x_1, ..., x_n \in \{\bot,\top\} $ is a satisfying assignment for $ K_1 \land ... \land K_m $. Consider the policy defined by:
\begin{align*}
M(1,a_1,a_2) & = \begin{cases} 1 & \exists i,j[a_1 = (j,x_j) \land a_2 = (i,j)] \\ 0 & \text{otherwise} \end{cases}
\end{align*}
In the scenarios where we want agent $ 1 $ to win (indexed by variables), agent $ 1 $ can take the action specified by that variable's value in the satisfying assignment. In the scenarios where we want agent $ 2 $ to win (indexed by clauses), agent $ 2 $ can take the action corresponding to the literal rendering that clause true. Thus $ M $ is a perfect policy.
\item To show that if our CKDG has a perfect policy then there is a satisfying assignment, suppose $ M $ is a policy with zero error. Then for every scenario, there must be an action our desired winner can take such that the policy always favors them no matter the action by the desired loser (a \textit{perfect} action). Let $ b_1 : \{1, ..., n\} \to A_1 $ denote such a choice of perfect action for each of the scenarios in which agent $ 1 $ is the desired winner and let $ b_2 : \{1, ..., m\} \to A_2 $ denote the same for when agent $ 2 $ is the desired winner. For every $ j \in \{1, ..., n\} $ set $ x_j = \bot $ if $ b_1(j) = (j,\bot) $, $ x_j = \top $ if $ b_1(j) = (j,\top) $, and set $ x_j $ arbitrarily if $ b_1(j) = \delta $ (note that these are our only options). I claim that we have found a satisfying assignment for $ K_1 \land ... \land K_m $. To see why, let's examine $ K_i $ for $ i \in \{1, .., m\} $. We must have $ b_2(i) \in (\{i\} \times J_i) \cup \{\delta\} $. If $ b_2(i) = (i,j) $ then variable $ x_j $ must make $ K_i $ true -- we cannot have $ b_1(j) = \delta $ because then $ b_2(i) $ wouldn't be a perfect action so we must have $ b_1(j) = (j,x_j) $ so the only way for our policy to favor agent $ 1 $ on $ ((j,x_j),(i,j)) $ but agent $ 2 $ on $ ((j,\neg n_{i,j}),(i,j)) $ is if $ x_j \neq \neg n_{i,j} $ so $ x_j = n_{i,j} $ so $ K_i = \top $. Furthermore, if $ b_2(i) = \delta $ then every variable must render $ K_i $ true because if a single one didn't then again our perfect actions would conflict. Thus if $ M $ is a policy with zero error then our formula is satisfiable.
\item To see that $ C_1 $ is injective, let's start off by arguing that observing $ C_1(s) $ allows us to determine whether agent $ 1 $ is supposed to win in scenario $ s $. If agent $ 1 $ should win then their action set will only reference a single variable, but if agent $ 2 $ should win then their action set will reference at least two distinct variables. If agent $ 1 $ should win then we can determine the scenario by looking at which variable gets referenced. If agent $ 2 $ should win then our given action set allows us to determine the variables used in the clause and the state of their negations, which we've stated is enough to uniquely determine the clause. Thus $ C_1 $ is injective.
\item To see that $ C_2 $ is injective, again note that given $ C_2(s) $ we can determine which agent is the desired winner. If agent $ 1 $ is the desired winner then we will get exactly one action per clause but if agent $ 2 $ is the desired winner then we will get at least two actions per clause (clauses reference at least two variables and there is at least one clause). If agent $ 1 $ is the desired winner then we can look at which variable we're getting access to in each clause. If agent $ 2 $ is the desired winner then we can look at which clause all of our actions are coming from. Thus $ C_2 $ is also injective. 
\end{itemize}

\subsection{Lemma 4.6 Proof Completion}

We have four remaining tasks: showing that if the CKDG has a perfect policy then so does the CKDDG, showing that if the CKDDG has a perfect policy then so does the CKDG, and showing that $ C_w, C_l $ are injective.
\begin{itemize}
\item To show that if the CKDG has a perfect policy then so does the CKDDG, suppose $ M : \{1, 2\} \times (A_1 \cup \{\delta\}) \times (A_2 \cup \{\delta\}) \to [0, 1] $ is a perfect policy for our CKDG. Now, define $ M' : \{1, 2\} \times (A \cup \{\delta\})^2 \to [0, 1] $ so that for all $ a_1, b_1 \in A_1 \cup \{\delta\}, a_2, b_2 \in A_2 \cup \{\delta\} $:
\begin{align*}
M'(1,(1,a_1),(1,b_1)) & = \frac{1}{2} \\
M'(1,(1,a_1),(2,a_2)) & = M(1,a_1,a_2) \\
M'(1,(1,a_1),\delta) & = 1 \\
M'(1,(2,a_2),(1,a_1)) & = M(2,a_1,a_2) \\
M'(1,(2,a_2),(2,b_2)) & = \frac{1}{2} \\
M'(1,(2,a_2),\delta) & = 1 \\
M'(1,\delta,(1,a_1)) & = 0 \\
M'(1,\delta,(2,a_2)) & = 0 \\
M'(1,\delta,\delta) & = \frac{1}{2}
\end{align*}
I claim that $ M' $ is a perfect policy for our CKDDG. To show this, suppose $ s' \in S' $ and suppose $ w \in \{1, 2\} $ is the desired winner of a non-anonymous scenario in the unfurled version of our CKDDG. Let's define $ i_{s'} = \begin{cases} 1 & u(1,s') > u(2,s') \\ 2 & u(2,s') > u(1,s') \end{cases} $ (we cannot have equality since $ s' \in S' $). Since $ M $ is perfect and since $ s' $ occurred in our original CKDG with non-zero probability, agent $ i_{s'} $ must have had an action to play which beat every action agent $ -i_{s'} $ could play. Let's say that $ a(s') \in A_{i_{s'}} \cup \{\delta\} $ denotes this action. Now, if agent $ w $ performs action $ (i_{s'},a(s')) $ in our CKDDG, they are guaranteed to win our non-anonymized scenario (this can be seen by casework on $ w $ and $ i_{s'} $). Thus $ M' $ is a perfect policy for our CKDDG.
\item To show that if the CKDDG has a perfect policy then so does the CKDG, suppose $ M' : \{1, 2\} \times (A \cup \{\delta\})^2 \to [0, 1] $ is a perfect policy for our CKDDG. This means that for every non-anonymized scenario, the desired winner must be able to take an action which beats every available action of the desired loser. Let's say that $ b: \{1, 2\} \times S' \to A \cup \{\delta\} $ takes in a non-anonymized scenario and outputs such an action. 

For every $ s' \in S' $, let's define $ i_{s'} = \begin{cases} 1 & u(1,s') > u(2,s') \\ 2 & u(2,s') > u(1,s') \end{cases} $ as in the previous part. Now, I claim that there exists $ j \in \{1, 2\} $ such that for all $ s' \in S' $, if $ i_{s'} = j $, then $ b(j,s') \notin \{(j,\delta),\delta\} $ (note that it's very possible for both $ 1 $ and $ 2 $ to satisfy this property, in that case $ j $ is defined arbitrarily). To prove this statement, suppose towards a contradiction that there exists $ s_1' \in S' $ with $ i_{s_1'} = 1 $ and $ b(1,s_1') \in \{(1,\delta),\delta\} $ and also that there exists $ s_2' \in S' $ with $ i_{s_2'} = 2 $ and $ b(2,s_2') \in \{(2,\delta),\delta\} $. In the non-anonymized scenario $ (1, s_{1}') $ the desired loser (agent $ 2 $) has access to $ b(2,s_2') $ and in the non-anonymized scenario $ (2, s_{2}') $ the desired loser (agent 1) has access to $ b(1,s_1') $. Since these actions are perfect we must have $ M'(1,b(1,s_1'),b(2,s_2')) = 1 $ and $ M'(2,b(1,s_1'),b(2,s_2')) = 1 $, contradiction. Thus $ j \in \{1, 2\} $ exists.

Now we are ready to construct a perfect policy for our CKDG. Consider $ M : \{1, 2\} \times (A_1 \cup \{\delta\}) \times (A_2 \cup \{\delta\}) \to [0, 1] $ defined so that $ M(1, a_1, a_2) $ is equal to: 
\begin{align*}
\begin{cases} M'(1,(1,a_1),(2,a_2)) & a_1, a_2 \neq \delta \\ \max \left ( M'(1,(1,\delta),(2,a_2)), M'(1, \delta, (2,a_2)) \right ) & a_1 = \delta, a_2 \neq \delta \\ \min \left ( M'(1,(1,a_1),(2,\delta)), M'(1,(1,a_1),\delta) \right ) & a_1 \neq \delta, a_2 = \delta \\ 0 & a_1, a_2 = \delta, j = 1 \\ 1 & a_1, a_2 = \delta, j = 2 \end{cases}
\end{align*}
I claim that $ M $ is a perfect policy for our original CKDG. To show this, suppose we're in scenario $ s \in S $. If $ P(s) = 0 $ or $ u(1,s) = u(2,s) $ then $ s $ doesn't contribute anything to the overall error so we can assume $ s \in S' $. We want agent $ i_{s} $ to have a perfect action in the original CKDG. Let's define $ d(s) \in A_{i_s} \cup \{\delta\} $ by:
\begin{align*}
d(s) & = \begin{cases} a & b(i_s,s) = (i_s,a) \\ \delta & b(i_s,s) = \delta \end{cases}
\end{align*}
(Note that $ a \in A_{i_{s}} \cup \{\delta\} $.) Now $ d(s) $ is a perfect action for agent $ i_s $ in the original CKDG as can be seen by casework.
\item To show that $ C_w $ and $ C_l $ are injective, note that their outputs are always of the form $ \{i\} \times (C_i(s') \cup \{\delta\}) $ for some $ i \in \{1, 2\} $. Thus, if we see an output of the form $ (i, X) $ from either $ C_w $ or $ C_l $ then we can use injectivity of $ C_1, C_2 $ to find a unique $ s' \in S' $ with $ C_{i}(s') = X \setminus \{\delta\} $.
\end{itemize}

\section{Economic Example Parameter Setting}

We wish to find $ \alpha, k, l, (\mu_i)_{i=1}^{k}, (\nu_j)_{j=1}^{l} $ such that $ \sum_{i = 1}^{k} \sum_{j = 1}^{l} \theta_{i,j} \zeta_{i-1} \xi_{j-1} + \zeta_k + \xi_l \leq 0.000035 $ when $ r = \frac{3596}{7} $. Let's choose the value of $ \alpha $ as well as our sequences of $ \mu $'s and $ \nu $'s as hard-coded in the following Python 3 script: 

%\lstinputlisting[language=Python]{rectangular_error_new_notation.py}
\includegraphics[scale=0.8]{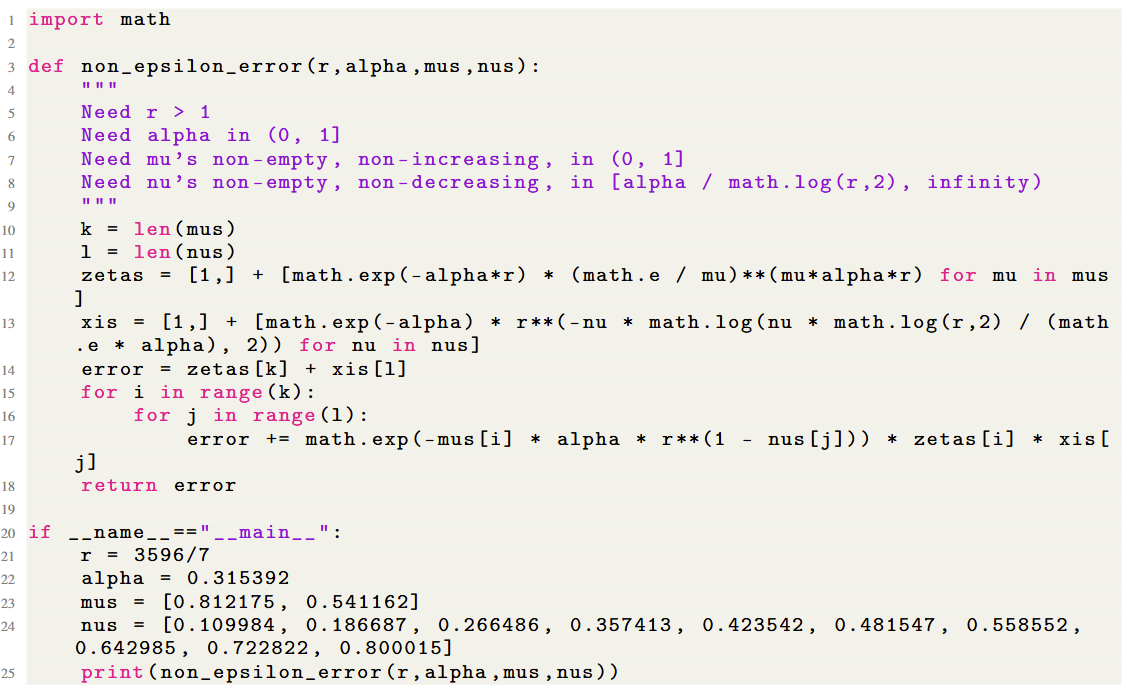}

If we run this code, we see that our error contribution with these parameter settings is 3.4722976180203346e-05 < 0.000035, as desired.

\section{Theorem 5.1 Proof Details}

To complete the proof of Theorem 5.1 we need upper bounds on the probabilities of our events $ E_{i,j} $ and we need an upper bound on the probability of avoiding the $ E_{i,j} $'s entirely. To facilitate both of these bounds, let's establish a lemma:
\begin{lemma}
\label{general_tail_bound}
Suppose $ s \in S' $.
\begin{enumerate}
\item[(i)] Suppose $ X $ is distributed according to $ B(|C_w(s)|,\beta) $ and suppose $ x \in \left ( 0, \beta m \right ] $. Then $ \mathbb{P} \left [ X \leq x \right ] \leq e^{-\beta m} \left ( \frac{e \beta m}{x} \right )^x $.
\item[(ii)] Suppose $ Y $ is distributed according to $ B(|C_l(s)|,\beta) $ and suppose $ y \in \left [ \beta n, \infty \right ) $. Then $ \mathbb{P} \left [ Y \geq y \right ] \leq e^{-\beta n} \left ( \frac{e \beta n}{y} \right )^y $.
\end{enumerate}
\end{lemma}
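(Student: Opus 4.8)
The plan is to run the standard Chernoff (exponential-moment) argument on each binomial tail and then transfer the dependence on the true sizes $|C_w(s)|$ and $|C_l(s)|$ onto $m$ and $n$ by a short monotonicity observation. Abbreviate $N_w = |C_w(s)|$ and $N_l = |C_l(s)|$; since $s \in S'$ we have $N_w \ge m$ and $N_l \le n$, and $\beta = \alpha/n > 0$. The only computational input is the usual moment-generating-function estimate $\mathbb{E}[e^{tZ}] = (1-\beta+\beta e^t)^{N} \le \exp\bigl(N\beta(e^t-1)\bigr)$ for $Z \sim B(N,\beta)$, which follows from $1+z \le e^z$.

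For part (i): set $\mu = N_w\beta$, so $\mu \ge \beta m \ge x$. If $x = \mu$ the asserted inequality is just $\mathbb{P}[X \le x] \le 1$; otherwise $x < \mu$, and taking $t = \ln(x/\mu) < 0$ in Markov's inequality applied to $e^{tX}$ together with the MGF bound gives
\[
\mathbb{P}[X \le x] \;\le\; e^{-tx}\exp\bigl(\mu(e^t-1)\bigr) \;=\; \Bigl(\tfrac{\mu}{x}\Bigr)^{x} e^{x-\mu} \;=\; e^{-\mu}\Bigl(\tfrac{e\mu}{x}\Bigr)^{x}.
\]
To finish, I replace $\mu$ by $\beta m$: the map $\mu \mapsto e^{-\mu}(e\mu/x)^x$ has logarithmic derivative $-1 + x/\mu$, hence is non-increasing on $[x,\infty)$, and since $x \le \beta m \le \mu$ this yields $e^{-\mu}(e\mu/x)^x \le e^{-\beta m}(e\beta m/x)^x$. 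For part (ii) I would mirror this on the upper tail: if $N_l = 0$ then $Y \equiv 0$ and $\mathbb{P}[Y \ge y] = 0$ (as $y \ge \beta n > 0$), so assume $N_l \ge 1$ and set $\mu' = N_l\beta \le \beta n \le y$; with $t = \ln(y/\mu') \ge 0$ the same Markov/MGF computation gives $\mathbb{P}[Y \ge y] \le e^{-\mu'}(e\mu'/y)^y$, and now $\mu' \mapsto e^{-\mu'}(e\mu'/y)^y$ has logarithmic derivative $-1 + y/\mu' \ge 0$ for $\mu' \le y$, so it is non-decreasing on $(0,y]$ and, since $\mu' \le \beta n \le y$, we get $\mathbb{P}[Y \ge y] \le e^{-\beta n}(e\beta n/y)^y$.

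There is no genuine obstacle here — this is a textbook Chernoff computation (see, e.g., \cite{mitzenmacher05:Tails}) — but two points need care. First, the two monotonicity claims point in opposite directions (the lower-tail bound decreases in the mean once the mean is at least the threshold, whereas the upper-tail bound increases in the mean while the mean stays at most the threshold), and one must orient them so that substituting $m$ for $N_w$ and $n$ for $N_l$ \emph{weakens} the estimate rather than strengthening it. Second, the degenerate/boundary cases $x=\mu$, $y=\mu'$, and $N_l = 0$ — where the optimizing $t$ is $0$ or undefined — must be disposed of separately, as indicated above.
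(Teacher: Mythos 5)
Your proof is correct and follows essentially the same route as the paper: both are the standard Chernoff argument for binomial tails, yours rederiving the bound from the moment-generating function while the paper cites the multiplicative Chernoff bounds of Mitzenmacher and Upfal and substitutes $\gamma$ accordingly. The only cosmetic difference is how the true sizes $|C_w(s)|,|C_l(s)|$ are replaced by $m,n$ (your monotonicity-in-the-mean observation versus the paper's event inclusion together with the fact that the Chernoff base is at most $1$), and your handling of the boundary cases is sound.
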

\begin{proof}
Let's use Chernoff bound on the tails of the binomial distribution:
\begin{enumerate}
\item[(i)] Theorem 4.5 part (1) from~\cite{mitzenmacher05:Tails} tells us that for all $ 0 < \gamma < 1 $:
\begin{align*}
\mathbb{P} \left [ X \leq (1 - \gamma) \beta |C_w(s)| \right ] & \leq \left ( \frac{e^{-\gamma}}{(1 - \gamma)^{(1 - \gamma)}} \right )^{\beta |C_w(s)|}
\end{align*}
Note that this formula continues to hold if $ \gamma = 0 $. We have $ \frac{e^{-\gamma}}{(1 - \gamma)^{(1 - \gamma)}} \leq 1 $ for all $ \gamma \in [0, 1) $ and we have $ \beta |C_w(s)| \geq \beta m $ since $ s \in S' $ so:
\begin{align*}
\mathbb{P} \left [ X \leq (1 - \gamma) \beta m \right ] & \leq \mathbb{P} \left [ X \leq (1 - \gamma) \beta |C_w(s)| \right ] \\
& \leq \left ( \frac{e^{-\gamma}}{(1 - \gamma)^{(1 - \gamma)}} \right )^{\beta |C_w(s)|} \\
& \leq \left ( \frac{e^{-\gamma}}{(1 - \gamma)^{(1 - \gamma)}} \right )^{\beta m}
\end{align*}
Now we can substitute $ \gamma = 1 - \frac{x}{\beta m} $ to see:
\begin{align*}
\mathbb{P} \left [ X \leq x \right ] & = \mathbb{P} \left [ X \leq \left ( 1 - \left ( 1 - \frac{x}{\beta m} \right ) \right ) \beta m \right ] \\
& \leq \left ( \frac{e^{-\left ( 1 - \frac{x}{\beta m} \right )}}{\left (1 - \left ( 1 - \frac{x}{\beta m} \right ) \right )^{\left ( 1 - \left ( 1 - \frac{x}{\beta m} \right ) \right )}} \right )^{\beta m} \\
& = \left ( \frac{e^{-\left ( 1 - \frac{x}{\beta m} \right )}}{\left ( \frac{x}{\beta m} \right )^{\left ( \frac{x}{\beta m} \right )}} \right )^{\beta m} \\
& = \frac{e^{-\left ( \beta m - x \right )}}{\left ( \frac{x}{\beta m} \right )^{x}} \\
& = e^{-\beta m} \left ( \frac{e \beta m}{x} \right )^x
\end{align*}
\item[(ii)] Theorem 4.4 part (1) from~\cite{mitzenmacher05:Tails} tells us that for all $ \gamma > 0 $:
\begin{align*}
\mathbb{P} \left [ Y \geq (1 + \gamma) \beta |C_l(s)| \right ] & \leq \left ( \frac{e^{\gamma}}{(1 + \gamma)^{(1 + \gamma)}} \right )^{\beta |C_l(s)|}
\end{align*}
Again, note that this formula continues to hold if $ \gamma = 0 $. If we substitute $ \gamma = \frac{y}{\beta |C_l(s)|} - 1 $ (note that $ y \geq \beta n \geq \beta |C_l(s)| $ since $ s \in S' $) then we see:
\begin{align*}
\mathbb{P} \left [ Y \geq y \right ] & = \mathbb{P} \left [ Y \geq \left ( 1 + \left ( \frac{y}{\beta |C_l(s)|} - 1 \right ) \right ) \beta |C_l(s)| \right ] \\
& \leq \left ( \frac{e^{\left ( \frac{y}{\beta |C_l(s)|} - 1 \right )}}{\left (1 + \left ( \frac{y}{\beta |C_l(s)|} - 1 \right ) \right )^{\left (1 + \left ( \frac{y}{\beta |C_l(s)|} - 1 \right ) \right )}} \right )^{\beta |C_l(s)|} \\
& = \frac{e^{\left ( y - \beta |C_l(s)| \right )}}{\left ( \frac{y}{\beta |C_l(s)|} \right )^{y}} \\
& = e^{-\beta |C_l(s)|} \left ( \frac{e \beta |C_l(s)|}{y} \right )^y
\end{align*}
We can note that for fixed $ w \geq 0 $, $ e^{-z} \left ( \frac{ez}{w} \right )^w $ is non-decreasing for $ z \in [0, w] $. Since $ \beta |C_l(s)|, \beta n $ both lie in $ [0, y] $ and $ \beta |C_l(s)| \leq \beta n $:
\begin{align*}
\mathbb{P} \left [ Y \geq y \right ] & \leq e^{-\beta |C_l(s)|} \left ( \frac{e \beta |C_l(s)|}{y} \right )^y \leq e^{-\beta n} \left ( \frac{e \beta n}{y} \right )^y
\end{align*}
\end{enumerate}
\end{proof}

We can now substitute specific values of $ x $ and $ y $ into these results:
\begin{lemma}
\label{specific_tail_bound}
Suppose $ s \in S' $.
\begin{enumerate}
\item[(i)] Suppose $ X $ has distribution $ B(|C_w(s)|,\beta) $ and suppose $ \mu \in (0, 1] $. Then $ \mathbb{P}[X \leq \mu \beta m] \leq e^{-\alpha r} \left ( \frac{e}{\mu} \right )^{\mu \alpha r} $.
\item[(ii)] Suppose $ Y $ has distribution $ B(|C_l(s)|,\beta) $ and suppose $ \nu \in \left [ \frac{\alpha}{\log_2(r)}, \infty \right ) $. Then $ \mathbb{P} \left [ Y \geq \nu \log_2(r) \right ] \leq e^{-\alpha} r^{-\nu \log_2 \left ( \frac{\nu \log_2(r)}{e \alpha} \right )} $.
\end{enumerate}
\end{lemma}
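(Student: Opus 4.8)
The plan is to derive both bounds directly from Lemma~\ref{general_tail_bound} by substituting well-chosen values of $x$ and $y$, then simplifying using the identities $\beta = \frac{\alpha}{n}$, $\beta m = \frac{\alpha m}{n} = \alpha r$, and $\beta n = \alpha$, which hold by the definitions of $\beta$ and $r$ in the proof of Theorem~\ref{ckddg_upper_bound}.

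For part (i), I would take $x = \mu\beta m = \mu\alpha r$. Since $\mu \in (0,1]$ we have $x \in (0,\beta m]$, so Lemma~\ref{general_tail_bound}(i) applies and gives $\mathbb{P}[X \leq \mu\beta m] \leq e^{-\beta m}\left(\frac{e\beta m}{x}\right)^x = e^{-\alpha r}\left(\frac{e}{\mu}\right)^{\mu\alpha r}$, which is exactly the claimed bound with no further manipulation.

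For part (ii), I would take $y = \nu\log_2(r)$. The hypothesis $\nu \geq \frac{\alpha}{\log_2(r)}$ ensures $y \geq \alpha = \beta n$, so Lemma~\ref{general_tail_bound}(ii) applies and yields $\mathbb{P}[Y \geq \nu\log_2(r)] \leq e^{-\beta n}\left(\frac{e\beta n}{y}\right)^y = e^{-\alpha}\left(\frac{e\alpha}{\nu\log_2(r)}\right)^{\nu\log_2(r)}$. It then remains to check the purely algebraic identity $\left(\frac{e\alpha}{\nu\log_2(r)}\right)^{\nu\log_2(r)} = r^{-\nu\log_2\left(\frac{\nu\log_2(r)}{e\alpha}\right)}$. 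Taking natural logarithms, the left-hand side becomes $\nu\log_2(r)\cdot\ln\left(\frac{e\alpha}{\nu\log_2(r)}\right)$ while the right-hand side becomes $-\nu\ln(r)\cdot\log_2\left(\frac{\nu\log_2(r)}{e\alpha}\right) = \nu\,\frac{\ln(r)}{\ln 2}\cdot\ln\left(\frac{e\alpha}{\nu\log_2(r)}\right)$; since $\frac{\ln r}{\ln 2} = \log_2(r)$, the two expressions coincide, completing the argument.

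There is no real obstacle here: the only mildly delicate point is the base-change bookkeeping in part (ii), namely keeping track of $\log_2$ versus $\ln$ when pulling $r$ into and out of the exponent, which is handled cleanly by writing $\log_2(r) = \ln(r)/\ln 2$. Everything else is a direct substitution into the Chernoff bounds already established in Lemma~\ref{general_tail_bound}.
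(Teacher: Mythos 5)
Your proposal is correct and follows essentially the same route as the paper: substitute $x=\mu\beta m=\mu\alpha r$ and $y=\nu\log_2(r)$ into Lemma~\ref{general_tail_bound}, verify the range conditions from $\mu\in(0,1]$ and $\nu\geq\frac{\alpha}{\log_2(r)}$, and simplify using $\beta m=\alpha r$ and $\beta n=\alpha$. The only cosmetic difference is that the paper carries out the final rewriting for part (ii) by expressing the bound as a power of $2$ and using $2^{\log_2 r}=r$, whereas you verify the same identity by comparing natural logarithms; the two computations are equivalent.
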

\begin{proof}
Let's appeal to Lemma \ref{general_tail_bound}.
\begin{enumerate}
\item[(i)] Define $ x = \mu \alpha r = \mu \beta m $. We have $ x \in (0, \beta m] $ since $ \mu \in (0, 1] $. By Lemma \ref{general_tail_bound} part (i):
\begin{align*}
\mathbb{P} \left [ X \leq \mu \beta m \right ] & \leq e^{-\beta m} \left ( \frac{e \beta m}{x} \right )^{x} = e^{-\alpha r} \left ( \frac{e \alpha r}{\mu \alpha r} \right )^{\mu \alpha r} = e^{-\alpha r} \left ( \frac{e}{\mu} \right )^{\mu \alpha r}
\end{align*}
\item[(ii)] Define $ y = \nu \log_2(r) $. We have $ y \in \left [ \beta n, \infty \right ) $ because:
\begin{align*}
\nu \log_2(r) \geq \frac{\alpha}{\log_2(r)} \log_2(r) = \alpha = \beta n 
\end{align*}
Thus:
\begin{align*}
\mathbb{P} \left [ Y \geq \nu \log_2(r) \right ] & \leq e^{-\beta n} \left ( \frac{e \beta n}{y} \right )^y \\
& = e^{-\alpha} \left ( \frac{e \alpha}{\nu \log_2(r)} \right )^{\nu \log_2(r)} \\
& = e^{-\alpha} 2^{\log_2 \left ( \frac{e \alpha}{\nu \log_2(r)} \right ) \nu \log_2(r)} \\
& = e^{-\alpha} 2^{- \nu \log_2(r) \log_2 \left ( \frac{\nu \log_2(r)}{e \alpha} \right )} \\
& = e^{-\alpha} r^{- \nu \log_2 \left ( \frac{\nu \log_2(r)}{e \alpha} \right )}
\end{align*}
\end{enumerate}
\end{proof}

Now we can proceed with proving our upper bounds for Theorem 5.1:
\begin{itemize}
\item Suppose $ s \in S', X \sim B(|C_w(s)|,\beta), Y \sim B(|C_l(s)|,\beta) $ where $ X $ and $ Y $ are independent. Suppose $ i \in \{1, ..., k\}, j \in \{1, ..., l\} $. We have:
\begin{align*}
&\, \mathbb{P} \left [ E_{i,j} \right ] \\
= &\, \mathbb{P} \Bigg [ X \in \left [ \mu_i \beta m, \begin{cases} \mu_{i - 1} \beta m & i \geq 2 \\ \infty & i = 1 \end{cases} \right ), \\
&\, Y \in \left ( \begin{cases} \nu_{j - 1} \log_2(r) & j \geq 2 \\ -\infty & j = 1 \end{cases}, \nu_j \log_2(r) \right ] \Bigg ] \\
\leq &\, \mathbb{P} \left [ X < \begin{cases} \mu_{i - 1} \beta m & i \geq 2 \\ \infty & i = 1 \end{cases}, Y > \begin{cases} \nu_{j - 1} \log_2(r) & j \geq 2 \\ -\infty & j = 1 \end{cases} \right ] \\
= &\, \mathbb{P} \left [ X < \begin{cases} \mu_{i - 1} \beta m & i \geq 2 \\ \infty & i = 1 \end{cases} \right ] \mathbb{P} \left [ Y > \begin{cases} \nu_{j - 1} \log_2(r) & j \geq 2 \\ -\infty & j = 1 \end{cases} \right ] \\
\leq &\, \left ( \begin{cases} e^{-\alpha r} \left ( \frac{e}{\mu_{i-1}} \right )^{\mu_{i - 1} \alpha r} & i \geq 2 \\ 1 & i = 1 \end{cases} \right ) \left ( \begin{cases} e^{-\alpha} r^{-\nu_{j-1} \log_2 \left ( \frac{\nu_{j-1} \log_2(r)}{e \alpha} \right )} & j \geq 2 \\ 1 & j = 1 \end{cases} \right ) \\
= &\, \zeta_{i-1} \xi_{j-1}
\end{align*}
\item Suppose $ s \in S' $. The only way for $ X, Y $ to avoid all of the $ E_{i,j} $ events is if $ X < \mu_k \beta m $ or if $ Y > \nu_l \log_2(r) $. We can thus apply a union bound to see that:
\begin{align*}
\mathbb{P} \left [ \underset{\substack{i \in \{1, ..., k\} \\ j \in \{1, ..., l\}}}{\bigcap} E_{i,j}^c \right ] & \leq \mathbb{P} \left [ X < \mu_k \beta m \right ] + \mathbb{P} \left [ Y > \nu_l \log_2(r) \right ] \\
& \leq e^{-\alpha r} \left ( \frac{e}{\mu_k} \right )^{\mu_k \alpha r} + e^{-\alpha} r^{-\nu_l \log_2 \left ( \frac{\nu_l \log_2(r)}{e \alpha} \right )} \\
& = \zeta_k + \xi_l
\end{align*}
\end{itemize}

Thus the proof of Theorem 5.1 is complete.

\section{Common Knowledge Error Lower Bound}

\begin{theorem}
Suppose $ m, n \in \mathbb{N}^+ $ with $ m \geq n $. Then there exists a CKDDG $ B = (A, S, P, C_w, C_l) $ such that $ |C_w(s)| \geq m, |C_l(s)| \leq n $ for all $ s \in S $ but $ e^{\text{CKDDG}}_{B}(M) \geq \frac{2n - m}{2m} $ for all policies $ M $.
\end{theorem}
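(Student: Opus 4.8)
The plan is to recycle the construction used in the proof of Theorem~\ref{piddg_lower_bound}: take $A=\{1,\dots,m\}$, $S=\binom{A}{n}$, let $P$ be uniform on $S$, and set $C_w(s)=A$, $C_l(s)=s$ for every $s\in S$. Then $|C_w(s)|=m$ and $|C_l(s)|=n$, so the action-count hypotheses hold, and the only thing left is the error lower bound. When $2n\le m$ the claimed bound is nonpositive, so assume $n\le m<2n$ (the case $m=n$ is a good sanity check: there $S=\{A\}$ and both de-anonymized scenarios over this single $s$ are the same zero-sum game with agent~$1$ maximizing, forcing error exactly $\tfrac12=\frac{2n-m}{2m}$). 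The substantive point is that the agents' common knowledge of $s$ degrades the error only from $\frac{n}{2m}$ down to $\frac{2n-m}{2m}$, not all the way to $0$.

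Fix a policy $M$ and let $V$ denote the matrix $M(1,\cdot,\cdot)$ on $(A\cup\{\delta\})\times(A\cup\{\delta\})$. By the minimax theorem $V$ has a value $v^*$ together with a maximin row strategy $p^*\in\Delta(A\cup\{\delta\})$ (so $(p^{*\intercal}V)_b\ge v^*$ for every column $b$) and a minimax column strategy $y^*\in\Delta(A\cup\{\delta\})$ (so $(Vy^*)_a\le v^*$ for every row $a$). Using the Remark that follows the definition of CKDDG error, $e^{\text{CKDDG}}_B(M)=\mathbb{E}_{s\sim P}\big[\tfrac12\big((1-\mathrm{val}_1(1,s))+\mathrm{val}_1(2,s)\big)\big]$, where $\mathrm{val}_1(j,s)$ is the value to agent~$1$ of the zero-sum game attached to the de-anonymized scenario $(j,s)$ in the CKDG induced by $B$. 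The step I want to make visible first is that both of these games are played on the \emph{same} matrix $V$ with agent~$1$ as the maximizer; they differ only in who is restricted: in $(1,s)$ the minimizer (agent~$2$) may use only $s\cup\{\delta\}$ while the maximizer uses all of $A\cup\{\delta\}$, whereas in $(2,s)$ it is the maximizer (agent~$1$) who is confined to $s\cup\{\delta\}$.

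With that in hand, the two quantitative estimates come from taking an optimal strategy of the full matrix $V$ and ``folding'' onto $\delta$ the probability mass it places on the forbidden actions $A\setminus s$; because every entry of $V$ lies in $[0,1]$, each unit of displaced mass changes any payoff by at most $1$. Folding $p^*$ for the restricted maximizer in scenario $(2,s)$ yields $\mathrm{val}_1(2,s)\ge v^*-\sum_{a\in A\setminus s}p^*(a)$, and folding $y^*$ for the restricted minimizer in scenario $(1,s)$ yields $\mathrm{val}_1(1,s)\le v^*+\sum_{b\in A\setminus s}y^*(b)$. Since $P$ is uniform on $\binom{A}{n}$, each fixed action lies outside $s$ with probability exactly $1-\tfrac{n}{m}$, so $\mathbb{E}_s\!\left[\sum_{a\in A\setminus s}p^*(a)\right]=\left(1-\tfrac{n}{m}\right)\sum_{a\in A}p^*(a)\le 1-\tfrac{n}{m}$ and likewise for $y^*$. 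Substituting into the error formula the $v^*$ terms cancel and one is left with $e^{\text{CKDDG}}_B(M)\ge\tfrac12\big(1-2(1-\tfrac{n}{m})\big)=\frac{2n-m}{2m}$, which is the claim.

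The part I expect to be the real obstacle is not any computation but the bookkeeping that exposes the above symmetry: one has to see that the winner's game $(1,s)$ and the loser's game $(2,s)$ are driven by the same $V$ and the same pair of optimal strategies, so that a single ``fold $p^*$'' move and a single ``fold $y^*$'' move bound both contributions simultaneously. This is exactly the ``hypothetical scenario'' device mentioned just after Theorem~\ref{piddg_lower_bound}: reason about what an agent would do if it held all of $A$, then pay the (at most $1$ per unit) price of transporting that reasoning into the real, restricted scenarios.
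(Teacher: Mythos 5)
Your proposal is correct and follows essentially the same route as the paper's Appendix E proof: the same construction ($C_w(s)=A$, $C_l(s)=s$ with $P$ uniform on $\binom{A}{n}$), the same device of taking optimal strategies of the hypothetical full-action game and folding their forbidden mass onto $\delta$, and the same expectation computation $\mathbb{E}_s\bigl[\sum_{a\in A\setminus s}p^*(a)\bigr]\le\frac{m-n}{m}$. Your phrasing in terms of a single matrix $V=M(1,\cdot,\cdot)$ with value $v^*$ is only a cosmetic repackaging of the paper's $(q_1^*,q_2^*)$ with $v_1+v_2=1$.
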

\begin{proof}
Let's consider the CKDDG with the exact same specification as the PIDDG from the proof of Theorem 5.5: $ A = \{1, ..., m\} $, $ S = \begin{pmatrix} A \\ n \end{pmatrix} $, $ P $ is uniform over $ S $, and $ C_w(s) = A, C_l(s) = s $ for all $ s \in S $. Let $ M $ be a policy for this CKDDG.

Let's repeat the key idea from the proof of Theorem 5.5: let's pretend that we're in a new de-anonymized scenario where agents $ 1 $ and $ 2 $ have access to every possible action and let's define $ q_1^*, q_2^* : (A \cup \{\delta\}) \to [0, 1] $ to be Nash equilibrium strategies in this made-up de-anonymized scenario. Let $ v_i $ denote the value to agent $ i \in \{1, 2\} $ of this de-anonymized scenario where $ v_1 + v_2 = 1 $. Let's again consider the strategies $ q_i^l : S \to (A \cup \{\delta\}) \to [0, 1] $ for $ i \in \{1, 2\} $ defined by:
\begin{align*}
q_i^l(s)(a) & = \begin{cases} q_i^*(a) & a \in s \\ 0 & a \in A \setminus s \\ q_i^*(\delta) + \sum_{b \in A \setminus s} q_i^*(b) & a = \delta \end{cases}
\end{align*}
Let's lower-bound the error by considering the value agent $ i \in \{1, 2\} $ receives in de-anonymized scenario $ (-i,s) $ by committing to the strategy $ q_i^l(s) $ for $ i \in \{1, 2\}, s \in S $. If agent $ 1 $ commits to the strategy $ q_1^l(s) $ in de-anonymized scenario $ (2, s) $ then agent $ 1 $ can guarantee a winning probability of at least:
\begin{align*}
&\, \min_{a_2 \in A \cup \{\delta\}} (q_1^l(s))^{\intercal} M(1,\cdot,a_2) \\
= &\, \min_{a_2 \in A \cup \{\delta\}} \left ( \sum_{a_1 \in s} q_1^*(a_1) M(1,a_1,a_2) + \left ( q_1^*(\delta) + \sum_{b \in A \setminus s} q_1^*(b) \right ) M(1,\delta,a_2) \right ) \\
\geq &\, \min_{a_2 \in A \cup \{\delta\}} \left ( \sum_{a_1 \in s} q_1^*(a_1) M(1,a_1,a_2) + q_1^*(\delta) M(1,\delta,a_2) \right ) \\
= &\, \min_{a_2 \in A \cup \{\delta\}} \left ( (q_1^*)^{\intercal} M(1,\cdot,a_2) - \sum_{a_1 \in A \setminus s} q_1^*(a_1) M(1,a_1,a_2) \right ) \\
\geq &\, \min_{a_2 \in A \cup \{\delta\}} \left ( (q_1^*)^{\intercal} M(1,\cdot,a_2) - \sum_{a_1 \in A \setminus s} q_1^*(a_1) \right ) \\
= &\, \min_{a_2 \in A \cup \{\delta\}} (q_1^*)^{\intercal} M(1,\cdot,a_2) - \sum_{a_1 \in A \setminus s} q_1^*(a_1) \\
= &\, v_1 - \sum_{a_1 \in A \setminus s} q_1^*(a_1)
\end{align*}
In the last step we have used the fact that $ q_1^* $ is a Nash equilibrium strategy. A symmetric line of reasoning shows that if agent $ 2 $ commits to the strategy $ q_2^l(s) $ in de-anonymized scenario $ (1,s) $ then agent $ 2 $ can guarantee a winning probability of at least $ v_2 - \sum_{a_2 \in A \setminus s} q_2^*(a_2) $.

Finally, taking the expectation over $ s \in S $ shows that:
\begin{align*}
e_B^{\text{CKDDG}}(M) \geq &\, \underset{s \sim P}{\mathbb{E}} \left [ \frac{\left ( v_1 - \sum_{a_1 \in A \setminus s} q_1^*(a_1) \right ) + \left ( v_2 - \sum_{a_2 \in A \setminus s} q_2^*(a_2) \right )}{2} \right ] \\
= &\, \frac{v_1 + v_2}{2} - \frac{1}{2} \underset{s \sim P}{\mathbb{E}} \left [ \sum_{a_1 \in A \setminus s} q_1^*(a_1) \right ] - \frac{1}{2} \underset{s \sim P}{\mathbb{E}} \left [ \sum_{a_2 \in A \setminus s} q_2^*(a_2) \right ] \\
= &\, \frac{1}{2} - \frac{1}{2} \underset{s \sim P}{\mathbb{E}} \left [ \sum_{a_1 \in A} \chi_{A \setminus s}(a_1) q_1^*(a_1) \right ] - \frac{1}{2} \underset{s \sim P}{\mathbb{E}} \left [ \sum_{a_2 \in A} \chi_{A \setminus s}(a_2) q_2^*(a_2) \right ] \\
= &\, \frac{1}{2} - \frac{1}{2} \sum_{a_1 \in A} q_1^*(a_1) \underset{s \sim P}{\mathbb{E}} \left [ \chi_{A \setminus s}(a_1) \right ] - \frac{1}{2} \sum_{a_2 \in A} q_2^*(a_2) \underset{s \sim P}{\mathbb{E}} \left [ \chi_{A \setminus s}(a_2) \right ] \\
= &\, \frac{1}{2} - \frac{1}{2} \sum_{a_1 \in A} q_1^*(a_1) \frac{m - n}{m} - \frac{1}{2} \sum_{a_2 \in A} q_2^*(a_2) \frac{m - n}{m} \\
\geq &\, \frac{1}{2} - \frac{1}{2} \frac{m - n}{m} - \frac{1}{2} \frac{m - n}{m} \\
= &\, \frac{2n - m}{2m}
\end{align*}
\end{proof}

\end{document}